\theoremstyle{theorem}
\newtheorem{theorem}{Theorem}
\newtheorem{lemma}{Lemma}
\newtheorem{exam}{Example}
\newenvironment{example}{\begin{exam} \rm }{\hfill  $\triangleleft$ \end{exam}}\newtheorem{des}{Design}
\newtheorem*{exam1}{Example 1 (continued)}
\newtheorem*{exam2}{Example 2 (continued)}
\newtheorem*{exam4}{Example 4 (continued)}
\newenvironment{example4}{\begin{exam4} \rm }{\hfill  $\triangleleft$ \end{exam4}}
\newtheorem{assum}{Assumption}
\newtheorem{rema}{Remark}
\newenvironment{remark}{\begin{rema} \rm }{\hfill $\triangleleft$ \end{rema}}
\newtheorem{defin}{Definition}
\newenvironment{definition}{\begin{defin} \rm }{\hfill  \end{defin}}
\newtheorem{mechan}{Mechanism}
\pgfplotsset{my style/.append style={axis x line=middle, axis y line=
           middle}}
\DeclareMathOperator{\supp}{supp}
\DeclareMathOperator{\inter}{int}
\DeclareMathOperator{\rank}{rank}
\DeclareMathOperator{\high}{high}
\DeclareMathOperator{\low}{low}
\DeclareMathOperator{\obj}{obj}
\DeclareFontFamily{U}{mathx}{\hyphenchar\font45}
\DeclareFontShape{U}{mathx}{m}{n}{
      <5> <6> <7> <8> <9> <10>
      <10.95> <12> <14.4> <17.28> <20.74> <24.88>
      mathx10
      }{}
\DeclareSymbolFont{mathx}{U}{mathx}{m}{n}
\DeclareMathSymbol{\bigtimes}{1}{mathx}{"91}
\renewcommand*{\@seccntformat}[1]{%
  \csname the#1\endcsname.\quad}
\begin{document}

\title{Belief identification by proxy\footnote{I am greatly indebted to Edi Karni, Jay Lu, Andy Mackenzie and Peter Wakker for their instrumental input at various stages of this project. Also many thanks to Mohammed Abdellaoui, Jean Baccelli, Aurelien Baillon, Georgios Gerasimou, Itzhak Gilboa, Pierfrancesco Guarino, Jeanne Hagenbach, Peter Hammond, Hannes Leitgeb, Lasse Mononen, Lise Vesterlund, Sasha Vostroknutov, Roberto Weber, and the audiences at FUR (Ghent), LOFT (Groningen), Workshop on strategic communication, bounded rationality and complexity (Cergy), Workshop on strategic communication (Graz), Epicenter Workshop (Maastricht), RCEA (online) and internal seminars in Maastricht University for very useful comments. }}

%Mohammed Abdellaoui: look at domain dependent risk

% Lasse Mononen: look at his related paper with ambiguity

% Peter Hammond: look at his related paper with ambiguity

%Karni (sent), Jay Lu (sent), Jean Baccelli (sent), Gilboa (sent), Andy, Peter Wakker, Mohammed Abdellaoui

%Maybe rewrite theorem as an ``iff statement" where we assume P0-P2 and we prove identification iff P3 holds 

\author{
\textsc{Elias Tsakas}\footnote{Department of Microeconomics and Public Economics, Maastricht University, P.O. Box 616, 6200 MD, Maastricht, The Netherlands; Homepage: \url{www.elias-tsakas.com}; E-mail: \href{mailto:e.tsakas@maastrichtuniversity.nl}{\texttt{e.tsakas@maastrichtuniversity.nl}}}\\ 
\small{\textit{Maastricht University}}}

\date{\small{November, 2023}}

\maketitle

\begin{abstract}

\noindent It is well known that individual beliefs cannot be identified using traditional choice data, unless we exogenously assume state-independent utilities. In this paper, I propose a novel methodology that solves this long-standing identification problem in a simple way. This method relies on the extending the state space by introducing a proxy, for which the agent has no stakes conditional on the original state space. The latter allows us to identify the agent's conditional beliefs about the proxy given each state realization, which in turn suffices for indirectly identifying her beliefs about the original state space. This approach is analogous to the one of instrumental variables in econometrics. Similarly to instrumental variables, the appeal of this method comes from the flexibility in selecting a proxy.

\vspace{0.5\baselineskip}

\noindent \textsc{Keywords:} Belief identification, state-dependent utility, proxy, actual belief, axiomatic foundation.

\noindent \textsc{JEL codes:} C81, C90, D80, D81, D82, D83.

\end{abstract}

\section{Introduction}

\subsection{Background} 

Identifying subjective beliefs is a central problem in economics that dates back to  the seminal contributions of \cite{Ramsey1931}, \cite{DeFinetti1937} and \cite{Savage1954}. While this literature originally focused on providing foundations for subjective probability, more recently economists have also recognized the practical importance of the question \citep{Manski2004}. This renewed interest comes from the fact that beliefs are nowadays used for a broad range of purposes, e.g., to make out-of-sample predictions; to obtain, compare, and aggregate forecasts; to study irregularities in information processing, etc. Therefore, accurately measuring actual beliefs is of outmost importance for applied research and policy.\footnote{Throughout the paper I assume that an actual belief exists, and it is therefore interpreted as an unobservable primitive. Still, my entire analysis can be easily adapted to classical settings where the belief is interpreted as a parameter that only acquires meaning within a SEU model (Section \ref{S:decision theoretic foundation}).}

%both in the field \citep[e.g.,][]{CaseShiller2012, Gennaioli2015, Giglio2021} and the lab \citep[e.g.,][]{FischbacherGachter2010, Smith2013}

The traditional methodology for belief identification relies on observing betting behavior: an agent's choices among acts are supposed to reveal her beliefs over the state space \citep{Savage1954, AnscombeAumann1963, Wakker1989}. However, there is a foundational caveat, known as the \textit{identification problem} \citep[e.g.,][]{Dreze1961,Dreze1987,Fishburn1973, KarniSchmeidlerVind1983}: 
\begin{quote}
\textit{The agent's beliefs can be identified through traditional choices over acts, only if we exogenously assume her utility function to be state-independent.}
\end{quote}
Let me illustrate the problem by means of an example.\footnote{This is a modification of the standard example, which was originally used in a letter correspondence between Bob Aumann and Leonard Savage \citep{Dreze1987}, and has been used widely in this literature since then. We will use this story as our running example throughout the paper. For a formal presentation, see Example \ref{EX:insurance problem}.} Suppose that a man suffers from Guillain-Barr\'{e} syndrom, a serious neurological condition that has left him completely paralyzed, and it is unclear whether he will recover from the disease within the next year. Suppose that we want to identify the subjective probability that his wife attaches to him recovering. Since she loves her husband very much, she values extra money more in a world where he has recovered and she can spend it with him, compared to a world where he is paralyzed. Hence, we cannot assume her utility over monetary payoffs to be state-independent. As a result, the relationship between her beliefs and her willingness to accept a bet is confounded by her relative utility for money between the two states. And since the latter is no longer normalized to 1 ---as it would have been under the state-independence assumption--- we cannot identify beliefs from her betting behavior. 

This poses serious problems. For starters, in many economic applications where we typically elicit beliefs, utility functions are state-dependent, either due to intrinsic preferences over states or due to unobservable state-dependent side payoffs, e.g., in insurance problems \citep{Arrow1974, CookGraham1977, DrezeRustichini2004, Karni2008b}, legal judgments \citep{Andreoni1991,FeddersenPesendorfer1998, Tsakas2017}, real estate decisions \citep{CaseShiller2012}, medical decisions \citep{PaukerKassirer1975,PaukerKassirer1980,Lu2019}. Moreover, widely-studied phenomena like political polarization are typically explained by psychological theories that rely on utilities being state-dependent, e.g., motivated reasoning  \citep{Kunda1990, Benabou2015} or ingroup favoritism \citep{Everett2015}. %In all these settings we cannot be confident that we measure the actual beliefs, and therefore our conclusions are in principle confounded by this measurement problem. 

Historically, the identification problem was already noticed during the very early days of decision theory, but was consciously put aside: as Leonard Savage admits in his well-known letter correspondence with Bob Aumann, ``the problem is serious, but I am willing to live with it until something better comes along" \citep{Dreze1987}. Most of the proposed solutions rely on non-traditional choice data (see literature overview, in Section \ref{S:literature}). However, none of them is unanimously accepted as the standard one, largely because non-traditional choice data are complex and cumbersome. This suggests that the identification problem is both hard-to-solve and still open.

\subsection{Contribution}

In this paper, I propose a novel solution, which is both theoretically sound and tractable. The key idea is to stick to the standard methodology of using traditional choice data, albeit over an extended state space. I do this by taking a product space, where one dimension is the original state space and the second dimension is what I call a \textit{proxy}. The crucial feature is that the agent does not have any stakes in the realization of the proxy conditional on our original state space. This allows us to identify her conditional beliefs about the proxy using standard elicitation mechanisms. The main result leverages these conditional beliefs about the proxy to uniquely identify the actual belief about the original state space (Theorem \ref{T:Main Theorem}). 

Let me illustrate the identification result in the context of the running example. Suppose that there is a promising new experimental drug in the market which is believed to expedite the recovery time from Guillain-Barr\'{e}. The husband is eligible to participate in the last phase of the clinical trial. A proxy describes the two possible contingencies regarding his participation, i.e., he will receive either the drug $(t_1)$ or a placebo $(t_2)$. The wife is informed that his chances of being placed in the treatment group are 50\%, but she is not told in which group he is actually placed in the end. Of course, if she were to learn that he received the drug, her belief would change to some $\nu:=\pi_S(\cdot|t_1)$; on the other hand, if she learned that he received the placebo, her belief would remain unchanged at $\mu:=\pi_S(\cdot|t_2)$.\footnote{Here, I implicitly assume that the husband does not even know that he is participating in the clinical trial, and therefore his health cannot be affected by psychological factors, such as the well-known placebo effect.} For a graphical illustration, see Figure \ref{FIG:proxies graphical representation} below.

\begin{figure}[h!]
\begin{center}
\begin{tikzpicture}[scale=1.2]
\draw[->] (0,1) -- (1.8,0);
\draw[->] (0,1) -- (1.8,2);
\draw (2.3,0) node[left] {\footnotesize{$t_2$}};
\draw (2.3,2) node[left] {\footnotesize{$t_1$}};
\draw[->] (2.3,0) -- (4.3,0);
\draw[->] (2.3,0) -- (4.3,1.9);
\draw[->] (2.3,2) -- (4.3,0.1);
\draw[->] (2.3,2) -- (4.3,2);
\draw (4.3,0) node[right] {\footnotesize{$s_2:$ the husband remains paralyzed}};
\draw (4.3,2) node[right] {\footnotesize{$s_1:$ the husband recovers}};
\draw (0.7,1.8) node {\footnotesize{$50\%$}};
\draw (0.7,0.2) node {\footnotesize{$50\%$}};
\draw (3,2) node[above] {\footnotesize{$\nu_1$}};
\draw (3,0) node[below] {\footnotesize{$\mu_2$}};
\draw (2.6,0.6) node {\footnotesize{$\mu_1$}};
\draw (2.6,1.4) node {\footnotesize{$\nu_2$}};
\end{tikzpicture}
\end{center}
\vspace{-1\baselineskip}
\caption{\footnotesize{The two contingencies that the proxy describes are ``the husband takes the experimental drug $(t_1)$" and ``the husband takes the placebo $(t_2)$". The prior probability of the husband (randomly) placed in the control group is 50\%. Conditional on the placebo being taken, the wife's belief remains at $\mu$. Conditional on the drug being taken, her belief changes to $\nu$.}}
\label{FIG:proxies graphical representation}
\end{figure}
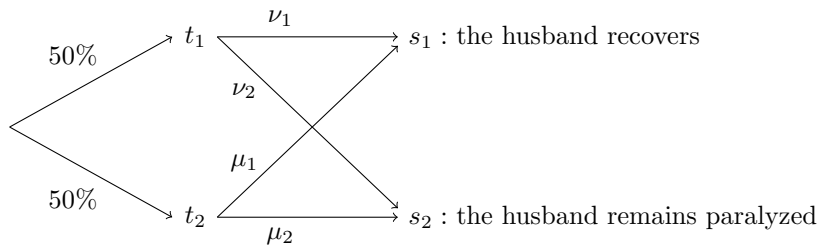

The crucial feature is that the wife \textit{does not have any stakes in the realization of the proxy conditional on her husband's health status}. For instance, conditional on him recovering (resp., remaining paralyzed), she will not care about whether he has taken the drug or the placebo, and therefore her utility from money will not depend on the realization of the proxy.\footnote{The implicit assumption that I make for the sake of this simple example is that the drug is known not have any side-effects that would make the wife care about whether her husband has taken it or not conditional on eventual health condition.} Hence, we can identify her conditional beliefs over $T$ given that her husband has recovered (viz., $\pi_T(\cdot|s_1)$), as well her conditional beliefs over $T$ given that he has remained paralyzed (viz., $\pi_T(\cdot|s_2)$). In both cases, we can do this, using standard elicitation tasks, \textit{without worrying about the identification problem}. 

Then, as long as $S$ and $T$ are not independent, we can uniquely identify her conditional belief $\pi_S(\cdot|t_2)$ given that he has taken the placebo, which is the belief $\mu$ she would have held if we had not introduced the proxy. A generalization of this simple example to any finite state space and any suitable proxy constitutes my main identification result (Theorem \ref{T:Main Theorem}).

Notably, my identification approach bears a striking conceptual similarity with the use of instrumental variables in econometrics. Recall that in IV regression we replace one awkward exogenous assumption (viz., orthogonality) with another exogenous assumption (viz., the exclusion restriction) which is easier to justify. In the same way, instead of exogenously assuming that the wife does not care about her husband's health status, we assume that she does not care about his treatment conditional on his eventual health status, with the latter being much easier to defend than the former. Thus, in both settings, it is not that we stop imposing exogenous assumptions all together, but rather that we strategically select the domain (viz., the instrumental variable and the proxy, respectively) so that the exogenous assumptions can be justified on the basis of common sense and/or existing literature. This flexibility to choose the instrumental variable is what has made IV regression so popular. Likewise, the flexibility in choosing a proxy from a large pool of of potential candidates, is what makes belief identification by proxy an appealing method.

\subsection{Literature}\label{S:literature}

The existing literature is roughly split in two streams, one that aims at providing tools for belief elicitation purposes, and an axiomatic one that focuses on achieving identification theoretically. A major conceptual difference between these two streams is that the former explicitly assumes that beliefs are an actual primitive (similarly to this paper), whereas the latter treats beliefs as a parameter within a model without much concern on whether these are actual beliefs or not. For an overview, see \cite{DrezeRustichini2004}, \cite{GrantZandt2008}, \cite{Karni2008, Karni2014} and \cite{Baccelli2017}. 

Starting with the first stream, the only papers that introduce mechanisms for eliciting beliefs under state-dependent preferences are \cite{Karni1999} and \cite{JaffrayKarni1999}, with the latter proposing two different mechanisms. In particular, \cite{Karni1999} and the first mechanism of \cite{JaffrayKarni1999} rely on assuming bounded state utilities, and they approximate the actual beliefs as monetary incentives grow arbitrarily large. The problem is that this approach requires either extreme costs, or hypothetical data. These pitfalls are recognized by the authors, who point out that in those early days there was no other option \citep[e.g.,][p.485]{Karni1999}. The second mechanism in \cite{JaffrayKarni1999} assumes state-dependence in the form of unobserved state-dependent payments: first it proceeds to elicit these payments, and subsequently to elicit beliefs using standard techniques. Unfortunately, this is a rather restrictive setting: in many applications, preferences over states are intrinsic. Moreover, eliciting the state-dependent payments is data-demanding.

Turning to the second stream, the various attempts within axiomatic decision theory differ in terms of the non-traditional choice domain they consider. There are three main methodological approaches, all of which rely on the agent's beliefs being somehow revised at some instance. 

The first such method, which was mainly followed in the early days, was based on exogenously manipulating context and a fortiori the agent's beliefs. For instance, \cite{Fishburn1973} allows for comparison between acts conditional on different events. \cite{KarniSchmeidlerVind1983}, \cite{Hammond1999} and \cite{KarniSchmeidler2016} introduce hypothetical preferences over acts conditional on exogenously given probabilities over the states. \cite{Kadane1990} allow the agent to compare lotteries at different states. \cite{Karni1992, Karni1993} allows the analyst to observe preferences conditional on different events. In some way, all of these papers assume the analyst to observe the agent's hypothetical choices. This feature makes the implementation of these methods cumbersome.

The second method relies on the idea that the agent herself can affect the state realization, and is called the \textit{moral hazard approach}. It originally appeared in the early sixties \citep{Dreze1961}, before resurfacing \citep{Dreze1987, DrezeRustichini1999} and recently receiving attention again \citep{Baccelli2021}. This literature relates to my work on a high level. The difference is that in moral hazard the analyst relies on the agent influencing the state realization, whereas in my case it is sometimes the analyst who may choose to influence the state realization (Example \ref{EX:experimental drug}). The second major difference is that the moral hazard methodology applies only to settings where the state has not been realized yet, while my method does not pose any such restriction, i.e., my theory also applies to factual beliefs. 

The third method is more recent, and relies on the agent updating her beliefs using information that the analyst provides \citep{Lu2019}. This is admittedly a very promising method, similar in spirit to some of my proxies (Example \ref{EX:expert charlatan I}). Similarly to my work, it can be potentially applied both in cases where the state has been already realized, as well as in cases where it has not.  It does not impose restrictive assumptions at the outset. The only potential drawback is that it requires stochastic choices (under different information structures), implying that a large dataset might be needed. 

Between the second and the third method, one can place a sequence of papers that rely on the agent influencing the state realization and choosing an act, conditional on different signals \citep{Karni2011a,Karni2011b,Karni2014}. The common element across these papers of Karni, the paper of Lu, and my work is that all can be thought to rely on traditional choice data over an extended state space. Nevertheless, the way this general idea is implemented is very different.

Finally, there is related work on the identification problem within non-expected utility models: While the problem is resolved in some cases outside SEU \citep{ChewWang2020, Mononen2023}, in other cases it persists \citep{Karni2020}.

\subsection{Structure of the paper}

Section \ref{S:identification problem} presents the background and formally introduces the identification problem. Section \ref{S:solution by proxy} presents my solution by proxy and formally states my main identification result. Section \ref{S:applications} leverages my identification result to provide a well-founded definition of actual utility function. Section \ref{S:decision theoretic foundation} provides decision-theoretic foundations for a proxy, thus formally distinguishing between exogenous and testable assumptions. Section \ref{S:conclusion} concludes. All proofs are relegated to the Appendix.

\section{The identification problem formalized}\label{S:identification problem}

%\subsection{The problem statement}\label{S:the problem statement}

Consider a finite state space $S=\{s_1,\dots,s_K\}$. A (female) agent has a full-support belief $\mu\in\Delta(S)$, called the \textit{actual belief}, which we want to identify. The usual choice domain is the set of acts, $\mathcal{F}_S$. An act is a function $f:S\rightarrow Q$ that maps each state $s\in S$ to a consequence $f_s:=f(s)$ in a convex subset $Q$ of a finitely dimensional Euclidean space. The agent is assumed to have preferences $\trianglerighteq$ over $\mathcal{F}_S$ that admit a \textit{State-Dependent Subjective Expected Utility (abbrev., SEU)} representation. That is, there exists some state-dependent utility function $u:Q\rightarrow\mathbb{R}^S$, such that the pair $(u,\mu)$ represents $\trianglerighteq$, i.e., for any two acts $f,g\in\mathcal{F}_S$,
\begin{equation}\label{EQ:state-dependent SEU representation}
f\trianglerighteq g \ \Leftrightarrow \ \mathbb{E}_\mu(u(f))\geq \mathbb{E}_\mu(u(g)),
\end{equation}
where 
\begin{equation}\label{EQ:SDSEU}
\mathbb{E}_{\mu}(u(f)):=\sum_{s\in S} \mu(s) u_s(f_s)
\end{equation}
is the Subjective Expected Utility of act $f$.

The appeal of SEU is that it allows us to disentangle beliefs from utilities. Is this enough for identifying the agent's actual beliefs? Unfortunately, not! The reason is that the pair $(u,\mu)$ is not the only SEU representation. Namely, take any other full-support belief $\tilde{\mu}\in\Delta(S)$, and define the rescaled utility function $\tilde{u}:Q\rightarrow\mathbb{R}^S$ so that, for every $s\in S$,
\begin{equation}\label{EQ:utility rescaling}
\tilde{u}_s=\frac{\mu(s)}{\tilde{\mu}(s)}u_s.
\end{equation}
Then, for every $f\in\mathcal{F}_S$, we obtain 
\begin{equation}\label{EQ:identification problem}
\mathbb{E}_{\tilde{\mu}}(\tilde{u}(f))=\mathbb{E}_\mu(u(f)). 
\end{equation}
Hence, the pair $(\tilde{u},\tilde{\mu})$ constitutes an alternative SEU representation of the same preferences. So, even if we observe the complete preference relation, we cannot tell if the actual belief is $\mu$ or $\tilde{\mu}$, i.e., beliefs cannot be identified from choices over $\mathcal{F}_S$.  This is known as the  \textit{identification problem} of SEU.

The early solution to the identification problem was to essentially assume it away. This approach boiled down to selecting a \textit{State-Independent Subjective Expected Utility (abbrev., SI-SEU)} representation \citep{Savage1954, AnscombeAumann1963,Wakker1989} among the infinitely many SEU representations. Formally, the idea is to pick a SEU representation $(\bar{u},\bar{\mu})$ such that $\bar{u}_s=\bar{u}_{s'}$ for all $s,s'\in S$, and then label $\bar{\mu}$ as the actual belief. This is known as the \textit{assumption of state-independent utilities}.

\begin{example}\label{EX:insurance problem}\textsc{(The wife's problem).}
Recall the introductory example, in which a man suffers from Guillain-Barr\'{e} syndrom, and it is unclear whether he will recover (state $s_1$) or remain paralyzed (state $s_2$). Suppose that his wife is a risk-neutral SEU maximizer, i.e., she evaluates monetary payoffs using the state-utility functions
$$u_1(q)=\gamma_1 q \mbox{ and } u_2(q)=\gamma_2 q,$$ 
with $\gamma_1,\gamma_2>0$. So, when we observe her being indifferent between buying and not buying an insurance with payout of \$100k and premium of \$10k, we conclude that her beliefs satisfy 
\begin{equation}\label{EQ:Example husband}
\frac{\mu(s_2)}{\mu(s_1)}=\frac{\gamma_1}{9\gamma_2}.
\end{equation}
Obviously, in order to identify her actual belief, we must specify the parameters $\gamma_1$ and $\gamma_2$. State-independent utilities correspond to one such specification, which sets $\gamma_1=\gamma_2$, and subsequently yields $\bar{\mu}(s_1)=0.90$. However, this is just one of the infinitely many parameter specifications that satisfy Equation (\ref{EQ:Example husband}), and in this sense it is an arbitrary choice.
\end{example}

To complicate things even further, oftentimes a SI-SEU representation does not even exist in the first place. This is for instance the case when state-preferences over consequences differ across states, either in some inherent way (Example \ref{EX:insurance problem ranking}), or because of unobservable state-contingent side payments (Example \ref{EX:insurance problem risk seeking}).

\begin{example}\label{EX:insurance problem ranking}
Continuing with the same example, suppose that the wife's preferences were instead represented by the SEU representation $(u,\mu)$, where
$$u_1(q)=q^2 \mbox{ and } u_2(q)=q.$$
The idea is that the wife is risk-seeking whenever her husband is healthy. In order for a SI-SEU representation to exist, it must necessarily be the case that $u_1=\alpha+\beta u_2$ for some $\alpha\in\mathbb{R}$ and $\beta>0$. It is not difficult to verify that such parameters do not exist.
 \end{example}

\begin{example}\label{EX:insurance problem risk seeking}
In the running example, suppose that the wife's preferences were instead represented by the SEU representation $(u,\mu)$, where
$$u_1(q)=(c+q)^2 \mbox{ and } u_2(q)=q^2.$$
The interpretation is that the wife is risk-seeking with quadratic state-utility function, and has an unobservable side payoff at $s_1$. Once again, there are no parameters $\alpha\in\mathbb{R}$ and $\beta>0$ such that $u_1=\alpha+\beta u_2$. So again, there is no SI-SEU representation.
\end{example}

Whenever there is no SI-SEU representation, the identification problem becomes immediately visible, as there is no obvious candidate among the infinitely many SEU representations to exogenously assume. However, this does not mean that in the converse scenario where a SI-SEU representation exists, the identification problem is resolved. In fact, the existence of a SI-SEU representation and the identification problem are conceptually orthogonal: the former is about axioms of choice being satisfied, while the latter is about model specification given that these axioms are satisfied. So, why is there a widespread misconception that they are linked? The reason is twofold and can be traced back into the history of SEU.

First of all, in the early days of decision theory, the literature was not interested in identifying actual beliefs. Instead, the focus was on predicting choices among acts, and on establishing that there exists a well-founded definition of subjective probability. And since SI-SEU representations served both purposes, the theoretical literature started treating $(\bar{u},\bar{\mu})$ as the default model, whenever it existed. Thus, when the applied literature eventually started being interested in identifying actual beliefs, it simply adopted $\bar{\mu}$ as `the correct definition of actual belief'. This is highlighted in two facts: first, (almost) the entire belief elicitation literature consists of mechanisms that elicit $\bar{\mu}$ \citep[][and references therein]{SchlagWeele2013,SchotterTrevino2014}; and second, the entire literature on State-Dependent SEU is motivated by the non-existence of SI-SEU representation \citep[][and references therein]{DrezeRustichini2004, GrantZandt2008, Karni2008, Karni2014,Baccelli2017}. 

%We should already point out that the traditional view treats state-independent utilities just as a normalization that conveniently allows us to decompose tastes from probabilities in our SEU model, and these probabilities may differ from the actual beliefs. Nevertheless, this discrepancy will not affect anyway our ability to predict choices among acts. Thus, it is questioned why we would want to identify the actual beliefs \citep[for an overview, see][and references therein]{Karni2014}. Our counterargument is that actual beliefs are often used for a wide range of purposes, beyond predicting choices among acts. For instance, an investor wants to know the actual beliefs of a real estate expert, not because she wants to predict whether the expert will buy a house for himself, but rather in order to use this forecast for her own downstream investment decision. Likewise, an experimental economist wants to know the actual beliefs of Democrats and Republicans about the winner of the next presidential election, not because she cares about how these subjects would bet in a prediction market, but rather in order to study motivated reasoning. It is exactly in such settings where the identification problem has serious consequences \citep[see][and references therein]{Baccelli2017}.

The second source of the misconception comes from the fact that state-independent utilities are often confused terminologically with state-independent preferences, with the latter being typically captured by axioms such as the state-monotonicity axiom in \cite{AnscombeAumann1963}, or P3-P4 in \cite{Savage1954}, or state-independent preference intensity in \cite{Wakker1989}. However, state-independent preferences are weaker than state-independent utilities, as the former simply guarantees the existence of a SI-SEU representation, whereas the latter selects a SI-SEU as the `correct model' whenever it exists. 

But how big of a problem is it to assume state-independent utilities (assuming of course that a SI-SEU representation exists)? The answer depends on the purpose of belief identification. If we simply want to predict the wife's insurance choices in Example \ref{EX:insurance problem}, it is no big deal to assume state-independent utilities, as all SEU models (including SI-SEU) deliver the exact same predictions. If, on the other hand, we want to use the wife's beliefs for other purposes (e.g., as data in an experiment that studies motivated beliefs, or for predicting out-of-sample if she will book a vacation for next summer), we need to select a SEU representation that involves her actual belief. In this last case, it is reasonable to assume state-independent utilities only if we are confident that \textit{the agent has no stakes} over the state realization. 

\begin{remark}\label{REM: no stakes}
Throughout the paper, \textit{``having no stakes in the state realization"} is used synonymously to \textit{``the utility function being state-independent"}. In this sense, it is an exogenous assumption, which cannot be expressed within the language of the preferences over acts. Instead, it is simply understood as a statement about her preferences over states, which cannot be revealed by traditional choices. 
\end{remark}

\section{Main identification result}\label{S:solution by proxy}

%\subsection{Extending the state space with a proxy}\label{S:extending the state space}

In this paper, I propose a novel approach that solves the identification problem in a theoretically sound and practically tractable way. The idea is to continue using traditional choice data, albeit over an appropriately extended state space. Formally, I introduce another state space $T=\{t_1,\dots,t_N\}$, and define the product space $S\times T$. The agent's actual belief over the extended state space is denoted by $\pi\in\Delta(S\times T)$. For any nonempty $A\subseteq S$ and $E\subseteq T$, we respectively denote the marginal conditional beliefs $\pi_T(E|A):=\pi(A\times E|A\times T)$ and $\pi_S(A|E):=\pi(A\times E|S\times E)$. Marginal (unconditional) beliefs are simply denoted by $\pi_T$ and $\pi_S$ respectively.

\begin{definition}\label{DEF:proxy}\textsc{(Proxy).}
We say that $T$ is a proxy for $S$ whenever the following are satisfied:
\begin{itemize}
\item[$(P_0)$] \textsc{Conditional no stakes:} The agent has no stakes in $T$ conditional on the realization of $S$.
\item[$(P_1)$] \textsc{Uninformative event:} There is some event $E\subseteq T$ such that $\pi_S(\cdot|E)=\mu$.
\item[$(P_2)$] \textsc{Objective marginal belief:} $\pi_T$ agrees with an objective probability measure $\pi_T^{\obj}$.
\item[$(P_3)$] \textsc{Linear independence:} $\pi_T(\cdot|s_1),\dots,\pi_T(\cdot|s_K)$ are linearly independent in $\mathbb{R}^T$.
\end{itemize}
\end{definition}

Condition $(P_0)$ is the key feature of $T$. Formally, this is satisfied if and only if any incentive-compatible mechanism (e.g., a binarized scoring rule) will elicit the agent's actual conditional beliefs $\pi_T(\cdot|s)$ for all $s\in S$ via the strategy method. Of course, ex ante the agent may still care about the realization of $T$, due to the correlation with $S$. But once uncertainty about $S$ has been resolved, she has no longer any stakes in the realization of the proxy.

Condition $(P_1)$ states that conditioning with respect to $E\subseteq T$ does not provide any information about $S$ in comparison to the benchmark case where $T$ is not introduced in the first place. Thus, $E$ is called the uninformative event. Notice that the event $E$ is not necessarily equal to $T$ itself, meaning that the actual belief $\mu$ does not necessarily coincide with the unconditional marginal $\pi_S$ (Examples \ref{EX:experimental drug} and \ref{EX:expert charlatan I} below). 

Condition $(P_2)$ consists of two parts. First, uncertainty about $T$ is described by an objective marginal belief, $\pi_T^{\obj}$, which is formed based on commonly known facts; second, it is assumed that the agent's actual marginal belief $\pi_T$ agrees with the objective belief. This assumption often reflects the idea that $\pi_T^{\obj}$ has been publicly announced by the analyst (Examples \ref{EX:experimental drug} and \ref{EX:expert charlatan I} below), while in other cases it describes some demographic characteristic whose distribution is already commonly known to be $\pi_T^{\obj}$ (Example \ref{EX:pharmaceutical's problem} below).  

Condition $(P_3)$ is perhaps the least-obvious of the four conditions. Loosely speaking, it guarantees that there is no redundant information within $T$. Whenever $S$ is binary, this condition reduces to $\pi_T(\cdot|s_1)\neq\pi_T(\cdot|s_2)$, i.e. $S$ and $T$ are not independent. 

Note that $(P_0)-(P_2)$ are exogenous assumptions, in the sense that, although they can be sometimes falsified, none of them can be verified using traditional choice data. On the other hand, under the assumption that $(P_0)$ holds, $(P_3)$ can be tested with traditional choice data. The formal distinction between exogenous and testable conditions is made in Section \ref{S:decision theoretic foundation}, where I provide decision-theoretic foundations. For the time being, let me illustrate different types of proxies with a series of examples.  

\begin{example}\label{EX:experimental drug}\textsc{(New drug).}
Recall the wife's problem, and similarly to the introduction suppose that there is a promising new experimental drug which supposedly helps recovery. The husband is eligible to participate in a clinical trial. The proxy describes the two possible groups in which the husband can be placed:
$$T=\{\mbox{treatment group }(t_1), \mbox{ control group }(t_2)\}.$$
Then, I argue that $(P_0)-(P_3)$ satisfy common sense:
\vspace{-0.3\baselineskip}
\begin{itemize}\setlength\itemsep{-0.25em}
\item[$P_0:$] The wife will not care if he has taken the drug or the placebo, once the husband's health condition is known.
\item[$P_1:$] The wife believes that in case he receives the placebo, his chance of recovery will remain unchanged, i.e., $\pi_S(\cdot|t_2)=\mu$. Hence, $\{t_2\}$ is the uninformative event.
\item[$P_2:$] The wife knows that the chance of him being included in the treatment group is 50\%.
\item[$P_3:$] The wife believes that the drug will affect his recovery probability, i.e., $\pi_T(\cdot|s_1)\neq \pi_T(\cdot|s_2)$.
\end{itemize}
\vspace{-0.3\baselineskip}
Such proxies are called \textit{influential actions} \citep{Tsakas2020}, as the analyst (stochastically) influences the state realization via the proxy.
\end{example}

\begin{example}\label{EX:expert charlatan I}\textsc{(Expert versus charlatan).}
Once again, we consider the wife's problem, supposing now that the wife is told that her husband's file was examined by some doctor who subsequently predicted that he will recover. The proxy describes the two possible expertise levels of the doctor:
$$T=\{\mbox{expert }(t_1), \mbox{ charlatan }(t_2)\}.$$
Again, I argue that $(P_0)-(P_3)$ are reasonable conditions:
\vspace{-0.3\baselineskip}
\begin{itemize}\setlength\itemsep{-0.25em}
\item[$P_0:$] Conditional on the husband's health condition, the wife will not care if the diagnosis has come from the expert or the charlatan.
\item[$P_1:$] Conditional on the diagnosis coming from a charlatan, the wife does not revise her beliefs, i.e., $\pi_S(\cdot|t_2)=\mu$. Hence, $\{t_2\}$ is the uninformative event.
\item[$P_2:$] The wife knows that the probability of the doctor being an expert is 50\%.
\item[$P_3:$] The wife believes that the probability that the good news actually came from the expert is larger if the husband recovers than if he does not, i.e., $\pi_T(\cdot|s_1)\neq\pi_T(\cdot|s_2)$.
\end{itemize}
\vspace{-0.3\baselineskip}
Such proxies are called \textit{evidence with stochastic reliability}. A similar idea has been used in a different context by \cite{Thaler2020}.
\end{example}

\begin{example}\label{EX:pharmaceutical's problem}\textsc{(Relevant gene).} 
We consider the wife's problem for a third time. It is commonly known that recovery is correlated with the presence of specific gene that 50\% of all males carry. The gene has no other consequence. The wife does not know if her husband carries it or not. So, the proxy becomes:
$$T=\{\mbox{gene }(t_1), \mbox{ no gene }(t_2)\}.$$
Once again, I argue that $(P_0)-(P_3)$ satisfy common sense:
\vspace{-0.3\baselineskip}
\begin{itemize}\setlength\itemsep{-0.25em}
\item[$P_0:$] Conditional on the husband's health condition, the wife does not care if the husband has the gene.
\item[$P_1:$] Her actual belief is the one she holds before learning whether he has the gene or not, i.e., $\mu=\pi_T(t_1)\pi_S(\cdot|t_1)+\pi_T(t_2)\pi_S(\cdot|t_2)$. Hence, $T$ is the uninformative event.
\item[$P_2:$] The wife knows that the probability of the husband having the gene is 50\%.
\item[$P_3:$] The wife knows that recovery is correlated with the gene, i.e., $\pi_T(\cdot|s_1)\neq \pi_T(\cdot|s_2)$.
\end{itemize}
\vspace{-0.3\baselineskip}
Such proxies are induced by a \textit{demographic characteristic} whose distribution is commonly known.
\end{example}

Then, my main result shows that we can leverage conditional beliefs about the proxy to identify the actual belief about $S$.

\begin{theorem}[Main identification result]\label{T:Main Theorem}
Assume that $T$ satisfies $(P_0)-(P_2)$. Then, the actual belief $\mu\in\Delta(S)$ is uniquely identified if and only if $(P_3)$ is satisfied.
\end{theorem}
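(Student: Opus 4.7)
The plan is to reduce the identification of $\mu$ to inverting a linear map: condition $(P_0)$ lets me treat the proxy as if utility were state-independent once the state in $S$ has been conditioned on, so standard incentive-compatible elicitation (e.g., a binarized scoring rule via the strategy method) reveals $\pi_T(\cdot|s)$ for every $s\in S$. Together with $\pi_T=\pi_T^{\obj}$ from $(P_2)$ and the known uninformative event $E$ from $(P_1)$, these data must pin down $\mu=\pi_S(\cdot|E)$, so the problem reduces to recovering the marginal $\pi_S$.

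For the ``if'' direction, I would exploit the law of total probability
$$\pi_T(t)\;=\;\sum_{s\in S}\pi_T(t|s)\,\pi_S(s),$$
view the elicited conditionals as the columns of an $N\times K$ matrix $M$, and observe that $\pi_S$ solves $Mp=\pi_T^{\obj}$. Condition $(P_3)$ is precisely the statement that $M$ has full column rank, so $\pi_S$ is the unique such solution in $\Delta(S)$; Bayes' rule applied to the known $E$ then delivers
$$\mu(s)\;=\;\frac{\pi_S(s)\,\pi_T(E|s)}{\sum_{s'\in S}\pi_S(s')\,\pi_T(E|s')},$$
with the denominator strictly positive because full support of $\mu$ forces $\pi_S(s)>0$ and $\pi_T(E|s)>0$ for every $s$.

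For the ``only if'' direction I would construct a second joint $\pi'$ that is observationally equivalent to $\pi$ but induces a different inferred belief. Pick any nonzero $v$ in the kernel of $M$; summing the identity $Mv=0$ over $t$ and using $\sum_t\pi_T(t|s)=1$ shows $\sum_s v_s=0$ for free, so $\pi_S':=\pi_S+\epsilon v$ lies in the open simplex for small $\epsilon>0$ (using full support of $\pi_S$), and the joint $\pi'(s,t):=\pi_S'(s)\pi_T(t|s)$ reproduces the conditionals $\pi_T(\cdot|s)$ and the marginal $\pi_T$. The step that requires care is showing $\mu':=\pi_S'(\cdot|E)\neq\mu$: after canceling the common factor $\pi_T(E|s)$ from the two Bayes expressions, $\mu'=\mu$ would force $\pi_S'$ to be proportional to $\pi_S$, and since both are probability vectors the proportionality constant must be $1$, contradicting $\epsilon v\neq 0$. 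This last uniqueness step is where I would need to be most careful; everything else is essentially linear algebra plus an application of Bayes' rule.
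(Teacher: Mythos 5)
Your proposal is correct and follows essentially the same route as the paper's proof: elicit the conditionals $\pi_T(\cdot|s)$ via $(P_0)$, solve the law-of-total-probability system against the known marginal from $(P_2)$ with uniqueness equivalent to the rank condition $(P_3)$, and then condition on the uninformative event from $(P_1)$ to recover $\mu=\pi_S(\cdot|E)$. If anything, your ``only if'' direction is more explicit than the paper's, which merely asserts that the linear system has a unique solution iff $\rank(\Pi)=K$; your construction of the perturbed marginal $\pi_S+\epsilon v$ and the verification that the induced conditional on $E$ genuinely changes fills in that step.
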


The crucial step towards identification is to first pin down the joint belief $\pi$, and subsequently to condition with respect to the uninformative event $E\subseteq T$. Let us provide an illustration.

\begin{example4}
Suppose that the wife reports probability $\pi_T(t_1|s_1)=0.80$ of the husband receiving the drug conditional on recovering, and probability $\pi_T(t_1|s_2)=0.40$ of the husband receiving the drug conditional on remaining paralyzed. There is a unique convex combination of the two aforementioned conditional probabilities inducing the objectively known $\pi_T(t_1)=0.50$, i.e., we have $0.25\pi_T(\cdot|s_1)+0.75\pi_T(\cdot|s_2)=\pi_T$. So, it is necessarily the case that $\pi_S(s_1)=0.25$, meaning that the joint belief $\pi$ is given by the following table:
\begin{center}
\begin{tikzpicture}[scale=1.9]
\draw[line width=0.8pt] (0,1) -- (0,2) -- (2.5,2) -- (2.5,1) -- (0,1);
\draw (0,1.3) node[left] {\footnotesize{placebo $(t_2)$}};
\draw (0,1.7) node[left] {\footnotesize{drug $(t_1)$}};
\draw (0,2) node[above right] {\footnotesize{recovers $(s_1)$}};
\draw (1.25,2) node[above right] {\footnotesize{paralyzed $(s_2)$}};
\draw (0.6,1.7) node {\footnotesize{$0.20$}};
\draw (1.9,1.7) node {\footnotesize{$0.30$}};
\draw (0.6,1.3) node {\footnotesize{$0.05$}};
\draw (1.9,1.3) node {\footnotesize{$0.45$}};
\end{tikzpicture}
\end{center}
Hence, her actual belief attaches probability $\mu(s_1)=\pi_S(s_1|t_2)=0.10$ to her husband recovering. Note that the identification would not have been possible if $\pi_T(\cdot|s_1)=\pi_T(\cdot|s_2)$ (i.e., if $S$ and $T$ were independent), as in this case we would have obtained $\lambda\pi_T(\cdot|s_1)+(1-\lambda)\pi_T(\cdot|s_2)=\pi_T$ for every $\lambda\in(0,1)$. 
\end{example4}

Importantly, for my identification strategy to work, it must be the case that $|\supp(\pi_T)|\geq|S|$. Otherwise, $(P_3)$ will be directly violated, since $\pi_T(\cdot|s_1),\dots,\pi_T(\cdot|s_K)$ will not be linearly independent, and a fortiori we will not be able to identify the joint belief $\pi$. The good news is that the amount of data that we need is still quite small, especially in comparison with other identification methods, e.g., whenever the state space is binary, we only need to elicit two probabilities.

On a high level, my approach bears a striking conceptual similarity to the use of instrumental variables in regression analysis. In particular, instead of automatically imposing an exogenous assumption (viz., orthogonality) on their variable of interest (viz., explanatory variable), econometricians assume it on another carefully-chosen variable (viz., instrumental variable). Similarly, instead of exogenously assuming without much thought that the agent has no stakes about the original state space (viz., $S$), I exogenously assume $(P_0)-(P_2)$ about a carefully-chosen proxy (viz., $T$). That is, in both cases, it is not that we stop imposing exogenous assumptions all together, but rather we only do so in specifically-chosen domains where the exogenous assumptions can be justified on the basis of common sense and/or well-established theoretical insights.

Such flexibility is what has made instrumental variables so successful. Similarly, I argue that my belief identification approach via proxies is appealing because there exists abundance of potentially suitable proxies, one of which will eventually satisfy common sense and/or will be consistent with well-established insights. For instance, in my running story, I have already considered three potential proxies (Examples \ref{EX:experimental drug}-\ref{EX:pharmaceutical's problem}) from which I can pick a suitable one. And of course, there are many more such proxies. This provides us with great flexibility and confidence that in practice we can almost always identify the agent's belief with my approach.

\section{Actual utility function}\label{S:applications}

In this section, I will show how my main identification result allows me to obtain a well-founded definition of an \textit{actual (state-dependent) utility function}. This is a very useful object, as it provides meaning to statements like ``a dollar is twice as valuable at $s_1$ compared to $s_2$" or ``the utility of state $s$ is equal $w(s)$", which are in turn commonly used in many applied settings, e.g., in the literature on motivated beliefs.

Note that these results will not be exclusive to my setting, and will also apply to other SEU models that uniquely identify actual beliefs. However, surprisingly the point has not been made so far in the literature. I speculate that this is because most of the existing literature does not make the explicit distinction between ``actual beliefs" and ``beliefs that are uniquely identified within a SEU model". This distinction is necessary in order to identify ``actual utilities" and is further discussed in Section \ref{S:decision theoretic foundation}.

Back in Example \ref{EX:insurance problem}, recall that every SEU representation $(\tilde{u},\tilde{\mu})$ of the wife's preferences satisfies 
$$\frac{\tilde{\mu}(s_2)}{\tilde{\mu}(s_1)}=\frac{\tilde{\gamma}_1}{9\tilde{\gamma}_2},$$
where $\tilde{u}_1(q)=\tilde{\gamma}_1 q$ and $\tilde{u}_2(q)=\tilde{\gamma}_2 q$. Therefore, there is a duality between identification of beliefs and identification of relative marginal utilities. This exact duality is what the literature traditionally leverages to go from state-independent utilities to a unique belief, viz., if we assume $\gamma_1=\gamma_2$ then we can uniquely identify $\bar{\mu}(s_1)=90\%$. In this case, on the other hand, we will use this duality the other way around, viz., once we have identified the actual belief (using Theorem \ref{T:Main Theorem}), we can uniquely identify a set of parameters $(\gamma_1,\gamma_2)$ that will constitute the ``actual utility function". For instance, in our running example, the actual belief is $\mu(s_1)=0.10$ (Example \ref{EX:experimental drug}, continued), and therefore the actual utility function can be written as
\begin{equation}\label{EQ:actual utilities}
u_1(q)=\underbrace{81\beta}_{\gamma_1} q \mbox{ and }u_2(q)=\underbrace{\beta}_{\gamma_2} q, 
\end{equation}
where $\beta>0$. The following theorem generalizes this result beyond my simple example, even in cases where there is no SI-SEU representation.

\begin{theorem}\label{THM:actual utility function}
Let $(\tilde{u},\tilde{\mu})$ be an arbitrary SEU representation. Moreover, assume that the agent's actual belief $\mu\in\Delta(S)$ has been identified (via Theorem \ref{T:Main Theorem}). Then, the class of utility functions $u:Q\rightarrow\mathbb{R}^S$ for which $(u,\mu)$ is a SEU representation is characterized by
\begin{equation}\label{EQ:identification of actual utility function}
u_s=\alpha_s+\beta\frac{\tilde{\mu}(s)}{\mu(s)}\tilde{u}_s,
\end{equation}
with $\alpha_s\in\mathbb{R}$ and $\beta>0$, for all $s\in S$.  
\end{theorem}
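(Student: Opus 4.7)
My plan is to prove the two inclusions separately. For the easy direction, I would take any $u$ of the form (\ref{EQ:identification of actual utility function}) and compute directly
\begin{equation*}
\mathbb{E}_\mu(u(f)) \;=\; \sum_{s\in S}\mu(s)\alpha_s + \beta\sum_{s\in S}\tilde{\mu}(s)\tilde{u}_s(f_s) \;=\; A + \beta\,\mathbb{E}_{\tilde{\mu}}(\tilde{u}(f)),
\end{equation*}
where $A:=\sum_s\mu(s)\alpha_s$ is a constant and $\beta>0$. Hence $\mathbb{E}_\mu(u(\cdot))$ is a positive affine transformation of $\mathbb{E}_{\tilde{\mu}}(\tilde{u}(\cdot))$, so it represents the same preference $\trianglerighteq$, i.e.\ $(u,\mu)$ is a SEU representation.

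For the converse direction, let $(u,\mu)$ be any SEU representation pinned to the identified belief $\mu$. Since $(u,\mu)$ and $(\tilde{u},\tilde{\mu})$ both represent $\trianglerighteq$, the standard cardinal uniqueness of SEU utility indices delivers constants $a\in\mathbb{R}$ and $b>0$ with
\begin{equation*}
\sum_{s\in S}\mu(s)u_s(f_s) \;=\; a + b\sum_{s\in S}\tilde{\mu}(s)\tilde{u}_s(f_s)\quad\text{for all }f\in\mathcal{F}_S.
\end{equation*}
Next I would isolate a single state by the classic sure-thing trick: fix a reference consequence $q_0\in Q$ and, for any $s\in S$ and $q\in Q$, compare the two acts that agree off $s$ with values $q$ and $q_0$ at $s$. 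Taking the difference of the displayed equality for these two acts makes every other state drop out, leaving
\begin{equation*}
\mu(s)\bigl(u_s(q)-u_s(q_0)\bigr)\;=\;b\,\tilde{\mu}(s)\bigl(\tilde{u}_s(q)-\tilde{u}_s(q_0)\bigr).
\end{equation*}
Solving for $u_s(q)$ and collecting the $q$-independent terms into a single constant $\alpha_s:=u_s(q_0)-b\tfrac{\tilde{\mu}(s)}{\mu(s)}\tilde{u}_s(q_0)$ yields exactly (\ref{EQ:identification of actual utility function}) with $\beta:=b$, completing the characterization.

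The main obstacle is the invocation of cardinal uniqueness of the aggregate expected-utility index; in the Anscombe--Aumann setup this is routine (affinity of $u_s$ plus the independence axiom embedded in SEU force $V$ to be unique up to positive affine rescaling), but here $Q$ is merely a convex subset of Euclidean space, so I would either appeal to the continuity/affinity properties implicit in the paper's notion of SEU representation, or observe that the displayed state-wise equation, derived from equality on all acts, is robust enough that no additional regularity on $u_s$ is needed beyond what makes $(u,\mu)$ a SEU representation in the sense of (\ref{EQ:state-dependent SEU representation})--(\ref{EQ:SDSEU}). Given this, the proof reduces to the two short computations above.
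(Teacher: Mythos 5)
Your easy direction is exactly the paper's. For the converse you take a genuinely different route: you first assert a \emph{global} affine relation $\sum_s\mu(s)u_s(f_s)=a+b\sum_s\tilde{\mu}(s)\tilde{u}_s(f_s)$ between the two aggregate indices and then localize it state by state with the sure-thing trick (that second step is clean and correct). The paper instead works state by state from the outset: it observes that $\tilde{u}_k=\phi_k\circ u_k$ for some continuous strictly increasing $\phi_k$ (since both represent the same conditional preference), and then \emph{proves} that each $\phi_k$ is affine with a common slope, by constructing indifferent acts $f^1\sim\cdots\sim f^K$ that perturb one coordinate at a time by $\delta/\mu(s_k)$ and showing that the one-sided derivatives $\phi'_{k+}$ are constant and equal across $k$. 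In effect, the paper gives a self-contained proof of precisely the uniqueness fact you black-box. Your appeal to that fact is legitimate if cited as the uniqueness theorem for continuous additive representations on a product of connected spaces with at least two essential coordinates (Wakker 1989, already in the bibliography), but two caveats: (i) calling it ``standard cardinal uniqueness of SEU utility indices'' is misleading here, because in this paper's Section 2 the state utilities are \emph{not} assumed affine in $Q$ (see Examples 2--3 with $u_1(q)=q^2$), so vNM/Anscombe--Aumann uniqueness does not apply and the additive-conjoint version is what you need; and (ii) your fallback that ``no additional regularity on $u_s$ is needed'' is not right --- some regularity (continuity, or the monotonicity-plus-convex-range structure the paper exploits via one-sided derivatives) is exactly what rules out non-affine monotone solutions of the resulting Pexider-type equation, and supplying it is the entire content of the paper's argument. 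With the citation made precise, your proof closes; as written, the crux is asserted rather than proved.
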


Not surprisingly, as this is a Bayesian framework, the arrival of new information affects the actual beliefs, but not the actual tastes. So, the set of actual utility functions remains invariant when the agent updates her beliefs upon observing new information about $S$.

\begin{remark}
Interestingly, if the preferences admit a SI-SEU representation $(\bar{u},\bar{\mu})$, Theorem \ref{THM:actual utility function} implies that the SEU function becomes 
\begin{equation}\label{EQ:decomposition of SEU}
\mathbb{E}_\mu\bigl(u(f)\bigr)=\sum_{s\in S} \mu(s) w(s) \bar{u}(f_s),
\end{equation}
where $w(s)=\bar{\mu}(s)/\mu(s)$ is the utility from state $s$. So, although state-independent preferences do not identify the actual belief, they still allow us to separately identify utility from a consequence at a state and utility from the state itself. 
\end{remark}

\section{Decision-theoretic foundations}\label{S:decision theoretic foundation}

In this section, I provide axiomatic foundations for the conditions that appear in Definition \ref{DEF:proxy}, in order to clarify which assumptions are exogenous and which are testable with traditional choice data. 

Aligned with \cite{AnscombeAumann1963}, let $Q=\Delta(X)$ be the set of lotteries over a finite set of prizes $X$.\footnote{The analysis can be easily extended to other decision-theoretic frameworks, e.g., \cite{Savage1954} or \cite{Wakker1989}.} Take the extended state space $S\times T$, and denote the set of all acts by $\mathcal{F}:=Q^{S\times T}$, with $f_{s,t}:=f(s,t)$ being the lottery attached to state $(s,t)$ by act $f$. The set of $S$-measurable acts is denoted by $\mathcal{F}_S$. Compound acts are defined in the usual way, i.e., $\lambda f+(1-\lambda)g$ induces the outcome $\lambda f_{s,t}+(1-\lambda)g_{s,t}$ at state $(s,t)\in S\times T$. For any $f,g\in\mathcal{F}$ and any $A\subseteq S\times T$, define the act $f_Ag$ by
$$(f_A g)(s,t)=\begin{cases}
f(s,t) & \mbox{if } (s,t)\in A,\\
g(s,t) & \mbox{if } (s,t)\notin A,
\end{cases}$$
i.e., $f_A g$ coincides with $f$ in $A$ and with $g$ everywhere else. 

The agent has (weak) preferences $\succeq$ over $\mathcal{F}$. As usual, $\succ$ and $\sim$ denote the asymmetric part (viz., strict preference) and the symmetric part (viz., indifference) respectively. Conditional preferences $\succeq_A$ are defined in the usual way: $f\succeq_A g$ if and only if $f_A h\succeq g_A h$ for all $h\in \mathcal{F}$. Throughout the paper, for notational simplicity, we write $\succeq_s:=\succeq_{\{s\}\times T}$ and $\succeq_{s,t}:=\succeq_{\{s\}\times \{t\}}$. Note that the preference relation $\trianglerighteq$  that we considered in Section \ref{S:identification problem} is not necessarily the same as $\succeq$ restricted in $\mathcal{F}_S$. 

An event $A\subseteq S\times T$ is called null if for any two acts $f,g\in\mathcal{F}$ we have $f \sim_A g$. For the time being, to simplify presentation, we assume that there are no null states:\footnote{Without exogenously assuming that there are no null states, the definition of a proxy needs to be slightly adjusted. First of all, observe that the support of $\pi$ can be uniquely identified from $\succeq$. Then, once we have pinned down the set of non-null states, conditions $(P_0)$, $(P_1)$ and $(P_3)$ will be respectively replaced by the following assumptions: (1) the agent has no stakes in $T$ conditional on any non-null $s\in S$, (2) the uninformative event $E\subseteq T$ is non-null, and (3) the vectors in $\{\pi_T(\cdot|s)|s\mbox{ is non-null}\}$ are linearly independent. }
\begin{itemize}
\item[$(A_0)$] \textsc{Full support ---} No state $(s,t)\in S\times T$ is null.
\end{itemize}

Consider the following standard axioms:
\begin{itemize}
\item[$(A_1)$] \textsc{Completeness ---} For all $f,g\in\mathcal{F}:$ $f\succeq g$ or $g\succeq f$.
\item[$(A_2)$] \textsc{Transitivity ---} For all $f,g,h\in\mathcal{F}:$ if $f\succeq g$ and $g\succeq h$, then $f\succeq h$.
\item[$(A_3)$] \textsc{Continuity ---} For all $f,g\in\mathcal{F}:$ $\{g\in\mathcal{F}: f\succeq g\}$ and $\{g\in\mathcal{F}: g\succeq f\}$ are closed in $\mathcal{F}$.
\item[$(A_4)$] \textsc{Independence ---} For all $f,g,h\in\mathcal{F}$, and for all $\lambda\in(0,1):$  $f\succeq g$ if and only if $\lambda f+(1-\lambda)h\succeq \lambda f+(1-\lambda)h$.
\end{itemize}
These axioms form the basic premise for connecting beliefs to choices through some SEU representation \citep{KarniSchmeidlerVind1983}.  Formally, $(A_1)-(A_4)$ are satisfied if and only if there exists a linear (state-dependent) utility function $\tilde{u}:Q\rightarrow\mathbb{R}^{S\times T}$ and a joint probability measure $\tilde{\pi}\in\Delta(S\times T)$ such that 
for all $f,g\in\mathcal{F}$, 
\begin{equation}\label{EQ:vNM state-dependent SEU}
f\succeq g \ \Leftrightarrow \ \mathbb{E}_{\tilde{\pi}}\bigl(\tilde{u}(f)\bigr)\geq \mathbb{E}_{\tilde{\pi}}\bigl(\tilde{u}(g)\bigr),
\end{equation} 
where the SEU from an arbitrary act $f$ is denoted by
\begin{equation}
\mathbb{E}_{\tilde{\pi}}\bigl(\tilde{u}(f)\bigr)=\sum_{s\in S}\sum_{t\in T} \tilde{\pi}(s,t) \tilde{u}_{s,t}(f_{s,t}).
\end{equation}
If we add $(A_0)$ on top of $(A_1)-(A_4)$, the belief $\tilde{\pi}$ becomes full-support. The fact that the utility function is linear implies that there is a (state-dependent) vNM utility function $\tilde{v}:X\rightarrow\mathbb{R}^{S\times T}$ such that, for all lotteries $q\in Q$, 
$$\tilde{u}_{s,t}(q)=\sum_{x\in X} q(x)\tilde{v}_{s,t}(x).$$

I henceforth treat $(A_0)-(A_4)$ as prerequisite assumptions, not because they cannot be tested, but rather because the belief identification problem arises only once they hold. Then, assuming that $(A_0)-(A_4)$ are satisfied, I will proceed to restrict the class of SEU representations that are consistent with $T$ being a proxy for $S$, given that $\pi_T^{\obj}$ has been already labelled as \textit{the objective marginal belief}. The latter does not mean that the agent's actual marginal $\pi_T$ necessarily agrees with $\pi_T^{\obj}$. It only means that there are commonly known objective facts from which $\pi_T^{\obj}$ is obtained.

\begin{definition}
A full-support SEU representation $(\bar{v},\bar{\pi})$ is called \textit{Conditionally State-Independent (abbrev., CSI-SEU)} whenever $\bar{v}$ is $S$-measurable. Furthermore, a CSI-SEU representation $(\bar{v},\bar{\pi})$ is called \textit{Proxy-Consistent (abbrev., PC-SEU)}, whenever (a) $\bar{\pi}_T=\pi_T^{\obj}$, and (b) $\bar{\pi}_T(\cdot|s_1),\dots,\bar{\pi}_T(\cdot|s_K)$ are linearly independent.
\end{definition}

Being able to represent $\succeq$ with a PC-SEU $(\bar{v},\bar{\pi})$ is the minimal necessary condition for $T$ to be a proxy, viz., $S$-measurability of $\bar{v}$ is necessary for $(P_0)$; agreement of $\bar{\pi}_T$ with $\pi_T^{\obj}$ is necessary for $(P_2)$; and linear independence of $\bar{\pi}_T(\cdot|s_1),\dots,\bar{\pi}_T(\cdot|s_K)$ is necessary for $(P_3)$. However, existence of a PC-SEU representation is not sufficient by any means, because $(P_0)-(P_3)$ involve exogenous assumptions. For instance, even if we establish that $(\bar{v},\bar{\pi})$ is a PC-SEU representation, it is never possible to conclude that $\bar{\pi}$ is the actual joint belief $\pi$, for the same reasons we could not conclude $\bar{\mu}=\mu$. In this sense, we can only reject the hypothesis that $T$ is a proxy by showing that $\succeq$ does not have a PC-SEU representation, but we cannot verify it even if a PC-SEU representation exists. I further discuss this issue at the end of this section. For the time being, I will axiomatize the class of PC-SEU representations, beginning with the following axiom:
\begin{itemize}
\item[$(A_5)$] \textsc{Local state-monotonicity ---} For all $s\in S$, for all $t,t'\in T$, and for all $p,q\in Q:$ $p\succeq_{s,t} q$ if and only if $p\succeq_{s,t'} q$.
\end{itemize}
This is simply a weakening of the standard state-monotonicity axiom \citep{AnscombeAumann1963}, in that it postulates state-independent preferences only across states in $\{s\}\times T$, rather than across all states in $S\times T$. It is not difficult to show that, together with $(A_0)-(A_4)$, it axiomatizes CSI-SEU. But most importantly, analogously to \cite{AnscombeAumann1963}, it guarantees that the conditional belief $\bar{\pi}_T(\cdot|s)$ is uniquely identified for every $s\in S$, i.e., all CSI-SEU representations yield the same $\bar{\pi}_T(\cdot|s_1),\dots,\bar{\pi}_T(\cdot|s_K)$ (Lemma \ref{L:Exogenous no stakes}).

In order to formally state my next two axioms, I first need to introduce some additional notation. Let $\mathcal{E}=\{ f^{\textup{high}}, f^{\textup{low}}\}$ be a binary menu of strict-dominance-ordered $S$-measurable acts, i.e., for all states $(s,t)$, it is the case that $f^{\high}_{s,t}=:f^{\high}_s$ and $f^{\low}_{s,t}=:f^{\low}_s$, and moreover
\begin{equation*}
f^{\textup{high}}\succ_{s,t} f^{\textup{low}}.
\end{equation*}
For instance, if $\succeq$ satisfies $(A_5)$, one such menu is formed if we take $f^{\high}$ and $f^{\low}$ to be a $\succeq$-maximal and a $\succeq$-minimal act respectively. If such menu exists, consider the lattice (with respect to the dominance relation),
\begin{equation}\label{EQ:lattice}
\mathcal{F}_\mathcal{E}:=\{f\in\mathcal{F}:f^{\high}\succeq_{s,t} f\succeq_{s,t} f^{\low} \mbox{ for all } (s,t)\in S\times T\}.
\end{equation}
Unlike $\mathcal{E}$ itself, $\mathcal{F}_\mathcal{E}$ does not necessarily contain only $S$-measurable acts. For instance, in our earlier example (where $f^{\high}$ and $f^{\low}$ are $\succeq$-maximal and $\succeq$-minimal acts respectively), $\mathcal{F}_\mathcal{E}$ is the entire set $\mathcal{F}$. Since at least one of the two relations in (\ref{EQ:lattice}) is strict, by continuity, for every $f\in\mathcal{F}_\mathcal{E}$ and every state $(s,t)$ there exists a unique $\lambda_{s,t}^{\mathcal{E},f}\in[0,1]$ such that 
\begin{equation}
f\sim_{s,t} (1-\lambda_{s,t}^{\mathcal{E},f})f^{\low}+\lambda_{s,t}^{\mathcal{E},f} f^{\high}.
\end{equation}
Thus, each $f\in\mathcal{F}_\mathcal{E}$ is associated with a unique vector $\lambda^{\mathcal{E},f}\in[0,1]^{S\times T}$. Then, define
\begin{equation}
\mathcal{T}_\mathcal{E}=\{f\in\mathcal{F}:\lambda^{\mathcal{E},f} \mbox{ is } T \mbox{-measurable}\}. 
\end{equation}
In general, $T$-measurability of $\lambda^{\mathcal{E},f}$ neither implies nor is it implied by $T$-measurability of $f$.

Finally, for every $f\in\mathcal{F}$, define the $S$-measurable act 
\begin{equation}
f_s^{\obj}:=\sum_{t\in T}\pi_T^{\obj}(t)f_{s,t}.
\end{equation}
The idea is that $f^{\obj}$ averages across the lotteries that $f$ assigns to the different states in $\{s\}\times T$ with respect to the exogenously given objective marginal $\pi_T^{\obj}$, and subsequently assigns to each state $(s,t)$ the average lottery $f_s^{\obj}$. 

Then, I am ready to introduce the two new axioms:

\begin{itemize}
\item[$(A_6)$] \textsc{Objective mixture indifference ---} There is some binary strict-dominance-ordered $\mathcal{E}\subseteq\mathcal{F}_S$ such that for all $f\in\mathcal{T}_\mathcal{E}:$ $f^{\obj}\sim f$.
\item[$(A_7)$] \textsc{Unique extension ---} For any binary strict-dominance-ordered $\mathcal{E}\subseteq\mathcal{F}_S$, and for any $\succeq'$ satisfying $(A_0)-(A_5):$
if $\succeq \ = \ \succeq'$ in $\mathcal{T}_\mathcal{E}$ then $\succeq \ = \ \succeq'$ in $\mathcal{F}_\mathcal{E}$.
\end{itemize}

Axiom $(A_6)$ will guarantee that, there is a CSI-SEU representation $(\bar{v},\bar{\pi})$ such that $\bar{\pi}_T=\pi_T^{\obj}$ (Lemma \ref{L:Exogenous Objective Marginal}). This axiom is of somewhat non-standard nature as it is defined with respect to the exogenously given $\pi_T^{\obj}$. It essentially postulates that if we replace every conditional belief $\bar{\pi}_T(\cdot|s)$ with the objective belief $\pi_T^{\obj}$, the preferences over $\mathcal{T}_\mathcal{E}$ will not change. 

Axiom $(A_7)$ will guarantee that the uniquely identified beliefs $\bar{\pi}_T(\cdot|s_1),\dots,\bar{\pi}_T(\cdot|s_K)$ that have been previously obtained as part of every CSI-SEU representation are linearly independent (Lemma \ref{L:Test linear independence}). 

\begin{theorem}\label{T:Axiomatization}
There exists a PC-SEU representation $(\bar{v},\bar{\pi})$ if and only if $\succeq$ satisfies $(A_0)-(A_7)$. Furthermore, $\bar{\pi}$ is uniquely identified from $\succeq$.
\end{theorem}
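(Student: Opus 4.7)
The proof has the standard two-direction axiomatization structure, plus the uniqueness claim on $\bar\pi$, and I would route everything through the three lemmas already announced in the text.

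\textbf{Sufficiency and uniqueness.} Starting from $(A_0)$--$(A_4)$, the standard state-dependent SEU representation theorem delivers a full-support pair $(\tilde u,\tilde\pi)$ representing $\succeq$. Adding $(A_5)$, Lemma \ref{L:Exogenous no stakes} upgrades this to a CSI-SEU representation $(\bar v,\bar\pi)$ and pins down the conditionals $\bar\pi_T(\cdot|s)$ uniquely for every $s\in S$. Adding $(A_6)$, Lemma \ref{L:Exogenous Objective Marginal} guarantees the existence of a CSI-SEU representation whose marginal on $T$ is $\pi_T^{\obj}$. Adding $(A_7)$, Lemma \ref{L:Test linear independence} delivers linear independence of $\bar\pi_T(\cdot|s_1),\dots,\bar\pi_T(\cdot|s_K)$. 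The conjunction of these three facts is the definition of a PC-SEU representation. For the uniqueness claim on $\bar\pi$, the conditionals $\bar\pi_T(\cdot|s)$ are pinned down by Lemma \ref{L:Exogenous no stakes}, and any compatible marginal $\bar\pi_S\in\Delta(S)$ must satisfy the $|T|$ linear equations
$$\sum_{s\in S}\bar\pi_S(s)\,\bar\pi_T(t|s)=\pi_T^{\obj}(t),\quad t\in T,$$
which has a unique solution because of the linear independence of the conditionals; thus the joint $\bar\pi$ is uniquely identified from $\succeq$.

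\textbf{Necessity.} Assume $\succeq$ is represented by a PC-SEU $(\bar v,\bar\pi)$. Axioms $(A_0)$--$(A_4)$ follow from standard vNM-style arguments. $(A_5)$ is immediate because $\bar v$ is $S$-measurable, so conditional preferences over lotteries within $\{s\}\times T$ are governed by $\bar v_s$ alone. For $(A_6)$, I construct $\mathcal E=\{f^{\high},f^{\low}\}$ of $S$-measurable acts so that $c:=\bar v_s(f^{\high}_s)-\bar v_s(f^{\low}_s)$ is a common positive constant across $s$ (possible because each $\bar v_s$ is a nonconstant linear function on the convex $Q$). For any $f\in\mathcal T_\mathcal E$ with associated $T$-measurable $\lambda$,
$$\mathbb E_{\bar\pi}\bigl(\bar v(f)\bigr)-\mathbb E_{\bar\pi}\bigl(\bar v(f^{\low})\bigr)=c\sum_{t\in T}\lambda_t\,\bar\pi_T(t)=c\sum_{t\in T}\lambda_t\,\pi_T^{\obj}(t)=\mathbb E_{\bar\pi}\bigl(\bar v(f^{\obj})\bigr)-\mathbb E_{\bar\pi}\bigl(\bar v(f^{\low})\bigr),$$
so $f^{\obj}\sim f$. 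Finally, $(A_7)$ follows from Lemma \ref{L:Test linear independence}: linear independence of the conditionals, which is built into the PC-SEU definition, yields the unique-extension property.

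\textbf{Main obstacle.} The technically heaviest step is Lemma \ref{L:Test linear independence}, especially the direction $(A_7)\Rightarrow$ linear independence. I would argue contrapositively: if $\{\bar\pi_T(\cdot|s)\}_s$ fail linear independence, the marginal constraint $\sum_s\bar\pi_S(s)\bar\pi_T(t|s)=\pi_T^{\obj}(t)$ admits a nontrivial perturbation $\bar\pi_S^{\epsilon}\in\Delta(S)$; the alternative CSI-SEU $(\bar v,\bar\pi^{\epsilon})$ represents a preference $\succeq^{\epsilon}$ that is distinct from $\succeq$ but still satisfies $(A_0)$--$(A_5)$. With $\mathcal E$ chosen as above so that $a_s\equiv c$, the calculation used in the verification of $(A_6)$ shows that, on $\mathcal T_\mathcal E$, the SEU reduces to an expression that depends only on $\pi_T^{\obj}$ and is therefore invariant in $\epsilon$; hence $\succeq^{\epsilon}=\succeq$ on $\mathcal T_\mathcal E$. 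However, for a non-$T$-measurable $\lambda^{\mathcal E,g}$ the SEU depends explicitly on $\bar\pi_S^{\epsilon}$, so $\succeq^{\epsilon}\neq\succeq$ on $\mathcal F_\mathcal E$, contradicting $(A_7)$. The reverse direction (linear independence $\Rightarrow (A_7)$) is then the straightforward consequence of the uniqueness argument used above: agreement on $\mathcal T_\mathcal E$ together with Lemma \ref{L:Exogenous no stakes} and linear independence pins down the representation $(\bar v',\bar\pi')$ of any candidate $\succeq'$, forcing $\succeq'=\succeq$ throughout $\mathcal F_\mathcal E$.
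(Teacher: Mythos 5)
Your proposal is correct and takes essentially the same route as the paper: the paper's proof of this theorem is simply the observation that it follows from combining Lemmas \ref{L:Exogenous no stakes}, \ref{L:Exogenous Objective Marginal} and \ref{L:Test linear independence}, which is exactly the decomposition you use, and your verifications of $(A_5)$--$(A_7)$ under a PC-SEU mirror the arguments inside those lemmas. The only substantive addition is that you spell out the uniqueness of $\bar{\pi}$ (conditionals pinned down by Lemma \ref{L:Exogenous no stakes}, marginal pinned down by the linear system and linear independence), a step the paper leaves implicit but which is argued correctly here.
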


The first important implication of the previous result is that it allows us to formally state the exogenous assumptions in the definition of a proxy. These assumptions are: (1) the objective marginal belief $\pi_T^{\obj}$ agrees with the actual marginal belief $\pi_T$, (2) the conditional beliefs $\bar{\pi}_T(\cdot|s_1),\dots,\bar{\pi}_T(\cdot|s_K)$ agree with the actual conditional beliefs $\pi_T(\cdot|s_1),\dots,\pi_T(\cdot|s_K)$, and (3) the conditional belief $\bar{\pi}_S(\cdot|E)$ agrees with the actual belief $\mu$. Not surprisingly, all three of them have to do with the relationship between $\bar{\pi}$ and the actual beliefs. 

The second important implication of Theorem \ref{T:Axiomatization} goes back to the long-standing debate on the definition of subjective probability \citep[e.g.,][and references therein]{GrantZandt2008, Karni2014}. Although I personally subscribe to the more modern approach (according to which actual beliefs exist), this paper can be easily rewritten along the lines of the classical view (according to which the only observable primitive is the preference relation, and therefore beliefs should be simply defined as the probability measure that we uniquely identify within a class of SEU representations). Indeed, using Theorem \ref{T:Axiomatization}, I could have simply labelled the uniquely identified $\bar{\pi}_S(\cdot|E)$ as the agent's belief, without going into the whole discussion on whether this is her actual beliefs, or even on whether an actual belief exists in the first place.

\section{Conclusion}\label{S:conclusion}

In this paper, I proposed a novel approach for identifying subjective beliefs without exogenously imposing the awkward assumption of state-independent utilities. The idea is to enlarge the state space by introducing a second dimension, which I call a proxy. The key property of a proxy is that the agent has no stakes in its realization conditional on the original state space. It is exactly this property that allows us to uniquely identify the agent's conditional beliefs about the proxy given each realization of the original state space, using a variant of the strategy method. This method is less data-demanding and more flexible compared to other methods in the literature. Thus, it is not just theoretically sound, but also tractable. Of course, there are various important details which pertain to the experimental implementation of the methodology and need to be taken into account before the method is put in practice (e.g., calibrating for updating biases, accounting for the possibility of uncertain beliefs, dealing with the complexity that comes from explicitly spelling out the incentives of the elicitation mechanism or from the need to perform contingent reasoning, etc). Nevertheless, addressing all these issues is outside the scope of the present paper, as they are all orthogonal to the fundamental long-standing belief identification problem that this paper tackles.

	\appendix
	
	\numberwithin{equation}{section}
	
	\setcounter{lemma}{0}
	\setcounter{corollary}{0}
	\setcounter{theorem}{0}
	\setcounter{proposition}{0}
	\setcounter{rema}{0}
	\setcounter{defin}{0}
	
	\renewcommand{\thelemma}{\Alph{section}\arabic{lemma}}
	\renewcommand{\theproposition}{\Alph{section}\arabic{proposition}}
	\renewcommand{\thecorollary}{\Alph{section}\arabic{corollary}}
	\renewcommand{\thetheorem}{\Alph{section}\arabic{theorem}}
	\renewcommand{\therema}{\Alph{section}\arabic{rema}}
	\renewcommand{\thedefin}{\Alph{section}\arabic{defin}}

\section{Proofs}\label{S:proofs}

\begin{proof}[\textup{\textbf{Proof of Theorem \ref{T:Main Theorem}}}]
By $(P_0)$ all conditional beliefs $\pi_T(\cdot|s_1),\dots,\pi_T(\cdot|s_K)$ are uniquely identified by any standard incentive-compatible elicitation task (e.g., a binarized scoring rule) applied $K$ times, once for each $s\in S$. Since $\pi_T$ is in the convex hull of $\{\pi_T(\cdot|s_1),\dots,\pi_T(\cdot|s_K)\}$, there is $\lambda=(\lambda_1,\dots,\lambda_K)\in\mathbb{R}_+^K$ with $\lambda_1+\dots+\lambda_K=1$ such that 
\begin{equation}\label{EQ:proof of main Lemma 1}
\pi_T=\sum_{k=1}^K \lambda_k \pi_T(\cdot|s_k),
\end{equation}
where, by $(P_1)$, we know $\pi_T$. By the law of total probability, $\pi_S$ solves this system. Moreover, it is the unique solution if and only if matrix
$$\Pi=\begin{bmatrix} 
\pi_T(t_1|s_1) &  \cdots & \pi_T(t_N|s_1)\\
\vdots &  \ddots & \vdots\\
\pi_T(t_1|s_K) &  \cdots & \pi_T(t_N|s_K)
\end{bmatrix}$$
is such that $\rank(\Pi)=K$. Of course, the latter holds if and only if $(P_3)$ is satisfied. Then, we obtain $\pi(s,t)=\pi_S(s)\pi_T(t|s)$ by the chain rule of probability. Finally, by the definition of conditional probability we have 
\begin{equation}
\pi_S(s|E)=\frac{\pi(\{s\}\times E)}{\pi_T(E)}.
\end{equation}
And by $(P_2)$, we obtain $\mu=\pi_S(\cdot|E)$, which completes the proof.
\end{proof}

\vspace{1\baselineskip}

\begin{proof}[\textup{\textbf{Proof of Theorem \ref{THM:actual utility function}}}]
First, we will prove that, if $u$ is defined as in (\ref{EQ:identification of actual utility function}) then $(u,\mu)$ is a SEU representation. Take arbitrary $(\alpha_1,\dots,\alpha_K)\in\mathbb{R}^K$ such that $\sum_{k=1}^K\mu(s_k)\alpha_k=\alpha$, and $\beta>0$. Then, for all $f\in\mathcal{F}_S$,  
\begin{equation}
\mathbb{E}_\mu(u(f))=\alpha+\beta\mathbb{E}_{\tilde{\mu}}(\tilde{u}(f)),
\end{equation}
which completes this part of the proof.

\vspace{0.3\baselineskip} \noindent Let us now prove the converse, i.e., if $(\tilde{u},\mu)$ is a SEU then $\tilde{u}$ is necessarily defined as in (\ref{EQ:identification of actual utility function}). For each $k=1,\dots,K$, define the (convex) range of the original state-utility function, $Y_k:=u_k(Q)\subseteq\mathbb{R}$. Then, it is obvious that, the state-utility function $\tilde{u}_k$ is obtained via a strictly increasing transformation of $u_k$, i.e., there is a continuous strictly increasing $\phi_k:Y_k\rightarrow \mathbb{R}$ such that 
\begin{equation}\label{EQ:proof Thm2 transformation}
\tilde{u}_k=\phi_k\circ u_k. 
\end{equation}
This is because both $u_k$ and $\tilde{u}_k$ represent the same state-preferences. 

We will now show that $\phi_k$ is linear. Take an arbitrary $y\in\inter(Y_1\times\cdots\times Y_K)$, and for each $k=1,\dots,K$ define $y^k\in\mathbb{R}^K$ by 
\begin{equation}\label{EQ:definition yk}
y_\ell^k=\begin{cases}
y_\ell & \mbox{if } \ell\neq k,\\
y_k+\nicefrac{\delta}{\mu(s_k)} & \mbox{if } \ell=k.
\end{cases}
\end{equation}
Since $y$ is an interior point, there exists some $\delta_y>0$ such that $y^k\in Y_1\times\cdots\times Y_K$ for all $\delta\in(0,\delta_y)$, and every $k=1,\dots,K$. Hence, for each $k=1,\dots,K$, there exists an act $f^k\in\mathcal{F}_S$ such that, for every $\ell=1,\dots,K$, it is the case that
\begin{equation}\label{EQ:proof Thm2 state utilities}
u_\ell(f_\ell^k)=y_\ell^k, 
\end{equation}
Therefore, by construction, we obtain
\begin{equation}\label{EQ:proof Thm2 SEU 1}
\mathbb{E}_\mu(u(f^k))=\sum_{\ell=1}^K \mu(s_\ell)y_\ell^k=\sum_{\ell=1}^K \mu(s_\ell)y_\ell+\delta.
\end{equation}
Since the right hand-side does not depend on $k$, we have $f^1\sim\cdots\sim f^K$. Then, since $(\tilde{u},\mu)$ is a SEU representation, we obtain
\begin{equation}\label{EQ:proof Thm 2 indifference}
\mathbb{E}_\mu(\tilde{u}(f^1)=\cdots=\mathbb{E}_\mu(\tilde{u}(f^K).
\end{equation}
By combining (\ref{EQ:proof Thm2 transformation}) and (\ref{EQ:proof Thm2 state utilities}), we get
\begin{equation}
\mathbb{E}_\mu(\tilde{u}(f^k)=\sum_{\ell=1}^K \mu(s_\ell)\phi_\ell(y_\ell^k),
\end{equation}
and consequently, by (\ref{EQ:proof Thm 2 indifference}), it follows that for any two distinct $k,\ell=1,\dots,K$,
$$\sum_{m=1}^K\mu(s_m)\phi_k(y_m^k)=\sum_{m=1}^K\mu(s_m)\phi_\ell(y_m^\ell).$$
Using the definition of $y^k$ and $y^\ell$ from (\ref{EQ:definition yk}), the previous equation can be rewritten as
$$\mu(s_k)\phi_k(y_k+\nicefrac{\delta}{\mu(s_k)})+\mu(s_\ell)\phi_\ell(y_\ell)=\mu(s_k)\phi_k(y_k)+\mu(s_\ell)\phi_\ell(y_\ell+\nicefrac{\delta}{\mu(s_\ell)}).$$
Rearranging terms, dividing both sides with $\delta$, and taking right limits, yields
\begin{equation}\label{EQ:left derivatives}
\phi_{k+}'(y_k)=\lim_{\delta\downarrow0}\frac{\phi_k(y_k+\nicefrac{\delta}{\mu(s_k)})-\phi_k(y_k)}{\nicefrac{\delta}{\mu(s_k)}}=\lim_{\delta\downarrow0}\frac{\phi_\ell(y_\ell+\nicefrac{\delta}{\mu(s_\ell)})-\phi_\ell(y_\ell)}{\nicefrac{\delta}{\mu(s_\ell)}}=\phi_{\ell+}'(y_\ell).
\end{equation}
Note that the right derivatives are well-defined as the respective domains $Y_k$ and $Y_\ell$ are convex sets. We repeat this exercise with any $y'\in\inter(Y_1\times\cdots\times Y_K)$ which agrees with $y$ at all coordinates except $k$, i.e., $y_k\neq y_k'$ and $y_\ell= y_\ell'$ for all $\ell\neq k$. Thus, we obtain 
\begin{equation}\label{EQ:left derivatives1}
\phi_{k+}'(y_k')=\phi_{\ell+}'(y_\ell').
\end{equation}
But, since $y_\ell=y_\ell'$, it follows directly that, for any two $y_k,y_k'\in\inter(Y_k)$,
\begin{equation}\label{EQ:left derivatives2}
\phi_{k+}'(y_k)=\phi_{k+}'(y_k'),
\end{equation}
i.e., the right derivative of $\phi_k$ is constant in the interior of its domain. Therefore, together with continuity (including at the boundaries in case those belong to $Y_k$) it implies that $\phi_k$ is linear for all $k=1,\dots,K$. But then, by (\ref{EQ:left derivatives}), the slope of $\phi_k$ is the same for all $k=1,\dots,K$, which completes the proof.
\end{proof}

\vspace{1\baselineskip}

\begin{lemma}\label{L:Exogenous no stakes}
There is a CSI-SEU representation $(\bar{v},\bar{\pi})$ if and only if $\succeq$ satisfies $(A_0)-(A_5)$. Moreover, the conditional beliefs $\bar{\pi}_T(\cdot|s_1),\dots,\bar{\pi}_T(\cdot|s_K)$ are uniquely identified from $\succeq$. 
\end{lemma}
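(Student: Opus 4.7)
The statement decomposes into three tasks: derive a CSI-SEU representation from $(A_0)-(A_5)$, verify those axioms from any CSI-SEU representation, and identify $\bar{\pi}_T(\cdot \mid s)$ uniquely for each $s$. My approach is as follows.

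For the sufficiency direction, I would first invoke the classical result of \cite{KarniSchmeidlerVind1983} to extract from $(A_0)-(A_4)$ a state-dependent SEU representation $(\tilde{v}, \tilde{\pi})$ with full-support $\tilde{\pi} \in \Delta(S \times T)$ and a vNM-type linear $\tilde{v}: X \to \mathbb{R}^{S \times T}$. The role of $(A_5)$ is then to collapse the $t$-dependence of the utility: for each $s \in S$, local state-monotonicity makes $\succeq_{s,t}$ coincide across all $t \in T$, and since $\tilde{v}_{s,t}$ and $\tilde{v}_{s,t'}$ are both linear utility representations of this common preference on $Q$, vNM uniqueness delivers scalars $\alpha_{s,t} \in \mathbb{R}$ and $\beta_{s,t} > 0$ with $\tilde{v}_{s,t} = \alpha_{s,t} + \beta_{s,t}\, \bar{v}_s$ for a fixed reference $\bar{v}_s := \tilde{v}_{s, t_1}$. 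I would then absorb the slopes into the prior by setting
\begin{equation*}
\bar{\pi}(s,t) := \frac{\tilde{\pi}(s,t)\, \beta_{s,t}}{Z}, \qquad Z := \sum_{s' \in S,\, t' \in T} \tilde{\pi}(s',t')\, \beta_{s',t'},
\end{equation*}
and verify that $(\bar{v}, \bar{\pi})$ is a CSI-SEU representation: the $\alpha_{s,t}$ terms aggregate into an act-independent constant (irrelevant for rankings), $\bar{v}$ is $S$-measurable by construction, and $\bar{\pi}$ inherits full support from $\tilde{\pi}$ together with positivity of the $\beta_{s,t}$.

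For the necessity direction, a CSI-SEU is in particular a full-support state-dependent SEU, so $(A_0)-(A_4)$ follow by standard arguments, and $(A_5)$ is immediate from $S$-measurability of $\bar{v}$, which makes $\succeq_{s,t}$ and $\succeq_{s,t'}$ both representable by the same vNM utility $\bar{v}_s$. For uniqueness of $\bar{\pi}_T(\cdot \mid s)$, fix $s \in S$ and restrict attention to acts that vary only within $\{s\} \times T$ (holding an arbitrary continuation $h$ fixed outside this event); on this slice the preference reduces to
\begin{equation*}
f \succeq g \ \Longleftrightarrow \ \sum_{t \in T} \bar{\pi}_T(t \mid s)\, \bar{v}_s(f_{s,t}) \ \geq \ \sum_{t \in T} \bar{\pi}_T(t \mid s)\, \bar{v}_s(g_{s,t}),
\end{equation*}
which is a standard Anscombe--Aumann preference on $T$ with state-independent utility $\bar{v}_s$. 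Since $(A_0)$ guarantees no state is null, $\bar{v}_s$ is non-constant on $Q$, and the usual Anscombe--Aumann uniqueness argument pins $\bar{\pi}_T(\cdot \mid s)$ down uniquely from $\succeq$; repeating this for every $s \in S$ yields the second part of the claim.

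I expect the main obstacle to be bookkeeping rather than conceptual: one must ensure that $Z$ is well-defined and strictly positive (which follows from full support of $\tilde{\pi}$ and positivity of each $\beta_{s,t}$), check that the act-independent correction accumulated from the $\alpha_{s,t}$ indeed cancels out of the preference comparison, and choose the reference utilities $\bar{v}_s$ coherently across states so that the pair $(\bar{v}, \bar{\pi})$ is simultaneously a valid probability measure and a genuinely $S$-measurable utility.
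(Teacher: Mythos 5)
Your proof is correct and rests on the same idea as the paper's (one-line) proof, namely that $(A_5)$ turns each slice $\{s\}\times T$ into a standard Anscombe--Aumann problem whose unique subjective probability is $\bar{\pi}_T(\cdot\mid s)$. You are merely more explicit than the paper about the existence direction --- invoking Karni--Schmeidler--Vind for the joint representation and then absorbing the vNM slopes $\beta_{s,t}$ into the prior --- which fills in the assembly of the joint belief $\bar{\pi}$ that the paper leaves implicit.
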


\begin{proof}[\textup{\textbf{Proof}}]
It follows directly from applying \cite{AnscombeAumann1963} to $\succeq_s$ for every $s\in S$.
\end{proof}

\vspace{1\baselineskip}

\begin{lemma}\label{L:Exogenous Objective Marginal}
Suppose that $\succeq$ satisfies $(A_0)-(A_5)$, and let $\bar{\pi}_T(\cdot|s_1),\dots,\bar{\pi}_T(\cdot|s_K)$ be the uniquely identified conditional beliefs from Lemma \ref{L:Exogenous no stakes}. Then, there is some CSI-SEU representation $(\bar{v},\bar{\pi})$ with $\bar{\pi}_T=\pi_T^{\obj}$ if and only if $\succeq$ satisfies $(A_6)$.
\end{lemma}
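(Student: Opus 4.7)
The plan is to work entirely within a fixed CSI-SEU representation $(\bar{v},\bar{\pi})$, whose existence is guaranteed by $(A_0)-(A_5)$ via Lemma \ref{L:Exogenous no stakes}, along with the uniqueness of the conditionals $\bar{\pi}_T(\cdot|s_1),\dots,\bar{\pi}_T(\cdot|s_K)$. The key observation is that because $\bar{v}$ is $S$-measurable, the indifference defining $\lambda^{\mathcal{E},f}_{s,t}$ reduces to the scalar identity $\bar{v}_s(f_{s,t})=(1-\lambda)\bar{v}_s(f^{\low}_s)+\lambda\bar{v}_s(f^{\high}_s)$; hence each $\lambda^{\mathcal{E},f}_{s,t}$ is merely a normalized utility in the interval $[\bar{v}_s(f^{\low}_s),\bar{v}_s(f^{\high}_s)]$. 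Thus $f\in\mathcal{T}_\mathcal{E}$ if and only if there exists $\lambda\in[0,1]^T$ with $\bar{v}_s(f_{s,t})=\bar{v}_s(f^{\low}_s)+\lambda_t\Delta v_s$, where $\Delta v_s:=\bar{v}_s(f^{\high}_s)-\bar{v}_s(f^{\low}_s)>0$, and every such $\lambda$ is attained by taking $f_{s,t}:=(1-\lambda_t)f^{\low}_s+\lambda_t f^{\high}_s$.

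Next I would compute $\mathbb{E}_{\bar{\pi}}(\bar{v}(f))-\mathbb{E}_{\bar{\pi}}(\bar{v}(f^{\obj}))$ for $f\in\mathcal{T}_\mathcal{E}$. Using the linearity of the vNM utility $\bar{v}_s$ to evaluate $\bar{v}_s(f^{\obj}_s)=\sum_t\pi_T^{\obj}(t)\bar{v}_s(f_{s,t})$ and substituting the $T$-measurable form, the difference collapses to $\sum_t \lambda_t \sum_s \bar{\pi}_S(s)\Delta v_s[\bar{\pi}_T(t|s)-\pi_T^{\obj}(t)]$. Consequently, axiom $(A_6)$---which demands $f\sim f^{\obj}$ for every $f\in\mathcal{T}_\mathcal{E}$, equivalently for every $\lambda\in[0,1]^T$---is equivalent (via plugging in each unit vector $\lambda=e_t$) to the system $\sum_s w_s\,\bar{\pi}_T(t|s)=W\pi_T^{\obj}(t)$ for all $t\in T$, where $w_s:=\bar{\pi}_S(s)\Delta v_s>0$ and $W:=\sum_s w_s$.

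For the ``only if'' direction (existence of a CSI-SEU with $\bar{\pi}_T=\pi_T^{\obj}$ implies $(A_6)$), I would exhibit one convenient $\mathcal{E}$ for which the gaps coincide. Since no state is null under $(A_0)$, each $\bar{v}_s$ is a non-trivial vNM utility on the convex set $Q$, so its range is a non-degenerate real interval; hence I can pick lotteries $f^{\high}_s,f^{\low}_s\in Q$ with a common utility gap $c>0$, yielding $\mathcal{E}=\{f^{\low},f^{\high}\}\subseteq\mathcal{F}_S$ with $\Delta v_s\equiv c$. The derived condition then collapses to $c\,\bar{\pi}_T(t)=c\,\pi_T^{\obj}(t)$, which holds by hypothesis. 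For the converse (``if'' direction), the displayed system says precisely that $\pi_T^{\obj}$ lies in $\conv\{\bar{\pi}_T(\cdot|s):s\in S\}$ with strictly positive weights $w_s/W$. Since any CSI-SEU admits reparametrizations of its marginal $\bar{\pi}_S$ via the rescaling in equation (\ref{EQ:utility rescaling}), I can set $\bar{\pi}'_S(s):=w_s/W$ and accordingly rescale $\bar{v}$; the resulting CSI-SEU $(\bar{v}',\bar{\pi}')$ satisfies $\bar{\pi}'_T(t)=\sum_s\bar{\pi}'_S(s)\bar{\pi}_T(t|s)=\pi_T^{\obj}(t)$, as required.

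The main obstacle is conceptually modest but worth being precise about: axiom $(A_6)$ is existential (\emph{some} $\mathcal{E}$ works), whereas the equivalence hinges on selecting an $\mathcal{E}$ that equalises $\Delta v_s$ across $s$. The ``only if'' direction above confirms such an $\mathcal{E}$ can always be constructed, and the ``if'' direction conversely extracts the convex-combination identity from \emph{any} qualifying $\mathcal{E}$. A minor bookkeeping point is to verify that $\mathcal{T}_\mathcal{E}$ is indeed rich enough---its image under $f\mapsto\lambda^{\mathcal{E},f}$ is all of $[0,1]^T$---so that plugging in the unit vectors $\lambda=e_t$ is legitimate and pins down the identity $\sum_s w_s\,\bar{\pi}_T(t|s)=W\pi_T^{\obj}(t)$ separately for every $t$.
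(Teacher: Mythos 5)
Your argument is correct and follows essentially the same route as the paper's proof: the constant-utility-gap menu $\mathcal{E}$ for the necessity direction, and, for sufficiency, the rescaling of the marginal on $S$ to $w_s/W$ (the paper's $\hat{\pi}_S(s)/\beta_s$ normalization) combined with evaluating $(A_6)$ at the indicator acts $f^{\high}_{S\times\{t\}}f^{\low}$, which is exactly your substitution $\lambda=e_t$. Your explicit reduction of $(A_6)$ to the identity $\sum_s w_s\,\bar{\pi}_T(t|s)=W\pi_T^{\obj}(t)$ is a clean repackaging of the same computation, and your attention to the positivity of the weights and the richness of $\mathcal{T}_\mathcal{E}$ covers the points the paper handles implicitly.
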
 

\begin{proof}[\textup{\textbf{Proof}}] \textsc{Necessity:} By $(A_0)$ and $(A_3)$, there exists some $\delta>0$, and a pair of lotteries $q_s^{\high},q_s^{\low}\in Q$ for each $s\in S$, such that 
$$\bar{u}_s(q_s^{\high})-\bar{u}_s(q_s^{\low})=\delta.$$ 
Then, take $\mathcal{E}=\{f^{\high},f^{\low}\}$ such that, for every $s\in S$ and $t\in T$, 
\begin{eqnarray*}
f_{s,t}^{\high}&:=&q_s^{\high},\\
f_{s,t}^{\low}&:=&q_s^{\low}.
\end{eqnarray*}
Note that, for every $f\in\mathcal{F}_\mathcal{E}$, 
\begin{equation}\label{EQ: Proof Lemma 2-1}
\bar{u}_s(f_{s,t})=\lambda_{s,t}^{\mathcal{E},f}\delta+\bar{u}_s(q_s^{\low})
\end{equation}
Hence, for all $f\in\mathcal{T}_\mathcal{E}$, by $T$-measurability of $\lambda^{\mathcal{E},f}$, \begin{equation}\label{EQ: Proof Lemma 2-2}
\mathbb{E}_{\bar{\pi}}\bigl(\bar{u}(f)\bigr)=\delta\sum_{t\in T}\bar{\pi}_T(t)\lambda_{t}^{\mathcal{E},f}+\sum_{s\in S}\bar{\pi}_S(s)\bar{u}_s(q_s^{\low}).
\end{equation}
Furthermore, for the same $f\in\mathcal{T}_\mathcal{E}$, we obtain
\begin{eqnarray}
\mathbb{E}_{\bar{\pi}}\bigl(\bar{u}(f^{\obj})\bigr)&=&\sum_{s\in S}\bar{\pi}_S(s)\bar{u}_s(f_s^{\obj})\label{EQ: Proof Lemma 2-3}\\
&=&\sum_{s\in S}\bar{\pi}_S(s)\sum_{t\in T}\bar{\pi}_T(t)\bar{u}_s(f_{s,t})\label{EQ: Proof Lemma 2-4}\\
&=&\delta\sum_{t\in T}\bar{\pi}_T(t)\lambda_{t}^{\mathcal{E},f}+\sum_{s\in S}\bar{\pi}_S(s)\bar{u}_s(q_s^{\low}).\label{EQ: Proof Lemma 2-5}
\end{eqnarray}
where (\ref{EQ: Proof Lemma 2-3}) follows from $S$-measurability of $f^{\obj}$, (\ref{EQ: Proof Lemma 2-4}) follows from the definition of $f^{\obj}$ combined with $\bar{\pi}_T=\pi_T^{\obj}$, and (\ref{EQ: Proof Lemma 2-5}) follows from (\ref{EQ: Proof Lemma 2-1}) combined with $T$-measurability of $\lambda^{\mathcal{E},f}$. Hence, $f\sim f^{\obj}$, i.e., $(A_6)$ holds.

\vspace{0.5\baselineskip} \noindent \textsc{Sufficiency:} Let $\mathcal{E}=\{f^{\high},f^{\low}\}$ be the menu of $S$-measurable acts such that $f\sim f^{\obj}$ for all $f\in\mathcal{F}_\mathcal{E}$. Take an arbitrary CSI-SEU $(\hat{v},\hat{\pi})$ of $\succeq$. Note that there exist $\alpha_s\in\mathbb{R}$ and $\beta_s>0$ such that $\alpha_s+\beta_s \hat{u}_s(f_s^{\high})=1$ and $\alpha_s+\beta_s \hat{u}_s(f_s^{\low})=0$. Then, define the normalized $S$-measurable $\bar{u}$ by
\begin{equation}
\bar{u}_s:=\alpha_s+\beta_s \hat{u}_s
\end{equation}
for each $s\in S$.Moreover, for every $s\in S$, define the new marginal belief 
\begin{equation}
\bar{\pi}_S(s):=\frac{\hat{\pi}_S(s)/\beta_s}{\sum_{s'\in S} \hat{\pi}_S(s')/\beta_{s'}},
\end{equation}
and observe that for every $f\in\mathcal{F}$, we obtain
\begin{equation}
\mathbb{E}_{\hat{\pi}}\bigl(\hat{u}(f)\bigr)=\alpha+\beta\mathbb{E}_{\bar{\pi}}\bigl(\bar{u}(f)\bigr),
\end{equation}
where $\alpha:=\sum_{s\in S}\alpha_s\bar{\pi}_S(s)$ and $\beta:=\sum_{s'\in S}\hat{\pi}_S(s')/\beta_{s'}>0$ are both constants, and the joint belief $\bar{\pi}$ is defined by 
\begin{equation}
\bar{\pi}(s,t):=\bar{\pi}_S(s)\hat{\pi}_T(t|s). 
\end{equation}
Hence, the pair $(\bar{v},\bar{\pi})$ is a CSI-SEU. 

\vspace{0.5\baselineskip} \noindent Fix an arbitrary $t\in T$, and take the act 
$$\langle t\rangle:=f^{\high}_{S\times \{t\}}f^{\low},$$
meaning that for every state $(s,t)$, the state-utility becomes
$$\bar{u}_s(\langle t\rangle_{s,t'})=\begin{cases}
1 & \mbox{ if } t=t',\\
0 & \mbox{ if } t\neq t'.
\end{cases}$$
Hence, by construction, we have
\begin{equation}
\mathbb{E}_{\bar{\pi}}(\bar{u}(\langle t\rangle))=\sum_{s\in S} \bar{\pi}_S(s)\bar{\pi}_T(t|s)=\bar{\pi}_T(t).
\end{equation}
Moreover, by the definition of $\langle t\rangle^{\obj}$, we obtain
\begin{equation}
\mathbb{E}_{\bar{\pi}}(\bar{u}(\langle t\rangle^{\obj}))=\sum_{s\in S} \bar{\pi}_S(s)\pi_T^{\obj}(t)=\pi_T^{\obj}(t).
\end{equation}
Finally, note that by construction, the act $\langle t\rangle$ belongs to $\mathcal{T}_\mathcal{E}$. Hence, by $(A_6)$, we have $\bar{\pi}_T=\pi_T^{\obj}$.
\end{proof}

\vspace{1\baselineskip}

\begin{lemma}\label{L:Test linear independence}
Suppose that $\succeq$ satisfies $(A_0)-(A_5)$, and let $\bar{\pi}_T(\cdot|s_1),\dots,\bar{\pi}_T(\cdot|s_K)$ be the uniquely identified conditional beliefs from Lemma \ref{L:Exogenous no stakes}. Then, $\bar{\pi}_T(\cdot|s_1),\dots,\bar{\pi}_T(\cdot|s_K)$ are linearly independent if and only if $\succeq$ satisfies $(A_7)$.
\end{lemma}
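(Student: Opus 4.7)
The proof revolves around the observation that, using the CSI-SEU representation $(\bar{v},\bar{\pi})$ from Lemma \ref{L:Exogenous no stakes} and normalizing the induced state utilities so that $\bar{u}_s(f^{\low}_s)=0$ and $\bar{u}_s(f^{\high}_s)=1$, every $f\in\mathcal{F}_\mathcal{E}$ satisfies $\bar{u}_s(f_{s,t})=\lambda^{\mathcal{E},f}_{s,t}$, making its SEU value $\sum_{s,t}\bar{\pi}(s,t)\lambda^{\mathcal{E},f}_{s,t}$. Restricted to $f\in\mathcal{T}_\mathcal{E}$, the $T$-measurability of $\lambda^{\mathcal{E},f}$ collapses the value to $\sum_t\bar{\pi}_T(t)\lambda^{\mathcal{E},f}_t$, depending only on the marginal $\bar{\pi}_T$ and on the utilities, not on the joint. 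An analogous reduction applies, under the analogous normalization, to any $\succeq'$ with CSI-SEU $(\bar{v}',\bar{\pi}')$ satisfying $(A_0)-(A_5)$.

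For the direction from linear independence to $(A_7)$, I plan to leverage the parametric family $g^\mu_{s,t}:=(1-\mu_t)f^{\low}_s+\mu_t f^{\high}_s$ for $\mu\in[0,1]^T$, which sits in $\mathcal{T}_\mathcal{E}$ with $\lambda^{\mathcal{E},g^\mu}_t=\mu_t$. First, agreement of rankings on this family of linear functionals over $[0,1]^T$, anchored at the common extremes $\mu=\mathbf{0}$ and $\mu=\mathbf{1}$, yields $\bar{\pi}_T=\bar{\pi}'_T$. Second, perturbing $g^\lambda$ at a single coordinate $(s^*,t^*)$ by substituting any lottery $p$ with $\bar{u}_{s^*}(p)=\lambda_{t^*}$ produces another $\mathcal{T}_\mathcal{E}$-act that is $\succeq$-indifferent to $g^\lambda$; transferring this indifference to $\succeq'$ and exploiting $\bar{\pi}'(s^*,t^*)>0$ pins down $\bar{u}'_{s^*}(p)=\lambda_{t^*}$ on the entire indifference set $\{q:\bar{u}_{s^*}(q)=\lambda_{t^*}\}$, and an affine extension across $\lambda_{t^*}\in[0,1]$ delivers $\bar{u}'_s=\bar{u}_s$ throughout $Q$. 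Third, with utilities and marginals matched, linear independence of $\bar{\pi}_T(\cdot|s_1),\dots,\bar{\pi}_T(\cdot|s_K)$ forces the system $\bar{\pi}_T(t)=\sum_k\bar{\pi}_S(s_k)\bar{\pi}_T(t|s_k)$ to have a unique solution in $\bar{\pi}_S$; combining this with the parallel Lemma \ref{L:Exogenous no stakes} identification of $\succeq'$'s conditionals yields $\bar{\pi}'=\bar{\pi}$ and hence $\succeq\,=\,\succeq'$ on $\mathcal{F}_\mathcal{E}$.

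For the converse direction I plan to argue by contrapositive: if the conditionals are linearly dependent, pick $\gamma\in\mathbb{R}^S\setminus\{0\}$ with $\sum_k\gamma_k\bar{\pi}_T(\cdot|s_k)=0$, and observe that summing over $t\in T$ automatically yields $\sum_k\gamma_k=0$. For sufficiently small $\epsilon>0$, the vector $\bar{\pi}'_S:=\bar{\pi}_S+\epsilon\gamma$ is a valid full-support probability distribution, and the joint $\bar{\pi}'(s,t):=\bar{\pi}'_S(s)\bar{\pi}_T(t|s)$ has the same marginal $\bar{\pi}'_T=\bar{\pi}_T$ but $\bar{\pi}'\neq\bar{\pi}$. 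The CSI-SEU $(\bar{v},\bar{\pi}')$ defines a preference $\succeq'$ that satisfies $(A_0)-(A_5)$, agrees with $\succeq$ on $\mathcal{T}_\mathcal{E}$ by the marginal-only reduction from the setup, yet disagrees on suitable $S$-measurable acts whose rankings are governed by $\bar{\pi}_S$ versus $\bar{\pi}'_S$, falsifying $(A_7)$. The main technical hurdle lies in the first direction: the perturbation machinery must simultaneously pin down $\bar{\pi}_T$ and $\bar{u}_s$ from a geometrically restricted family of acts, and then the step from marginal and utility agreement to joint-belief equality must correctly couple the linear independence hypothesis with the separate Lemma \ref{L:Exogenous no stakes} identifications of the conditionals for the two preferences.
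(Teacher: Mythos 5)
Your contrapositive direction (linear dependence implies failure of $(A_7)$) is essentially the paper's argument and is sound: you build $\bar{\pi}'_S=\bar{\pi}_S+\epsilon\gamma$ with $\sum_k\gamma_k=0$ and the same mixture $\sum_k\bar{\pi}'_S(s_k)\bar{\pi}_T(\cdot|s_k)=\bar{\pi}_T$, keep the conditionals and the utility fixed, and detect the disagreement on $S$-measurable acts in $\mathcal{F}_\mathcal{E}$. This matches the paper, which exhibits the disagreement explicitly via the acts $f^1_{\{s\}\times T}f^2$ versus $f^0$ obtained as mixtures of $f^{\high}$ and $f^{\low}$.

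The direct direction, however, has a genuine gap at its final step. You establish $\bar{\pi}'_T=\bar{\pi}_T$ (via the $g^\mu$ family) and $\bar{u}'_s=\bar{u}_s$ (via the single-coordinate perturbations) --- both steps are fine, and the first is a clean alternative to the paper's route --- but then you invoke linear independence of $\bar{\pi}_T(\cdot|s_1),\dots,\bar{\pi}_T(\cdot|s_K)$ to conclude $\bar{\pi}'_S=\bar{\pi}_S$ and hence $\bar{\pi}'=\bar{\pi}$. The uniqueness of the solution to $\bar{\pi}_T=\sum_k\bar{\pi}'_S(s_k)\,\bar{\pi}'_T(\cdot|s_k)$ only bites if the conditionals of $\succeq'$ coincide with those of $\succeq$, i.e., $\bar{\pi}'_T(\cdot|s_k)=\bar{\pi}_T(\cdot|s_k)$, and nothing in your argument delivers this. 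Matching utilities and $T$-marginals is strictly weaker than matching joints: the independent coupling $\bar{\pi}'(s,t):=\bar{\pi}_S(s)\bar{\pi}_T(t)$ has the same $T$-marginal and the same utility, its SEU on every act in $\mathcal{T}_\mathcal{E}$ reduces to $\sum_t\bar{\pi}_T(t)\lambda_t^{\mathcal{E},f}$ exactly as for $\bar{\pi}$, yet $\bar{\pi}'\neq\bar{\pi}$ whenever the conditionals are not all equal. Your appeal to ``the parallel Lemma \ref{L:Exogenous no stakes} identification of $\succeq'$'s conditionals'' does not close this, since that lemma identifies $\succeq'$'s conditionals from $\succeq'$, not from $\succeq$. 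The paper closes the gap differently: it reads the hypothesis of $(A_7)$ as agreement on $\mathcal{T}_\mathcal{E}$ strong enough to yield $\succeq_s=\succeq'_s$ for every $s$, and only then applies the uniqueness part of Lemma \ref{L:Exogenous no stakes} to get $\bar{v}=\bar{v}'$ and $\bar{\pi}'_T(\cdot|s)=\bar{\pi}_T(\cdot|s)$; after that, a hypothetical $\bar{\pi}_S\neq\bar{\pi}'_S$ would, by linear independence, force $\bar{\pi}_T(t)\neq\bar{\pi}'_T(t)$ for some $t$, producing a detectable disagreement inside $\mathcal{T}_\mathcal{E}$ between the bet $f^{\high}_{S\times\{t\}}f^{\low}$ and a constant-utility act. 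You need to add the step that pins down the conditional beliefs of $\succeq'$ at each $s$; your utility-matching machinery, while correct, does not substitute for it.
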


\begin{proof}[\textup{\textbf{Proof}}]
\textsc{Intermediate step:} Fix an arbitrary menu of $S$-measurable acts $\mathcal{E}=\{f^{\high},f^{\low}\}$ such that $f^{\high}\succ_{s,t} f^{\low}$ for all states $(s,t)$. Take a CSI-SEU representation $(\hat{v},\hat{\pi})$ of $\succeq$, and like in the proof of sufficiency in the previous lemma, obtain another CSI-SEU representation $(\bar{v},\bar{\pi})$ such that $\bar{u}_s(f_s^{\high})=1$ and $\bar{u}_s(f_s^{\low})=1$. Hence, for every $f\in\mathcal{F}_\mathcal{E}$, and every state $(s,t)$, we have
$$\bar{u}_s(f_{s,t})=\lambda_{s,t}^{\mathcal{E},f}.$$
Hence, for every $f\in \mathcal{T}_\mathcal{E}$, by $T$-measurability of $\lambda^{\mathcal{E},f}$, we get 
\begin{equation}\label{EQ:Utility of T-measurable act}
\mathbb{E}_{\bar{\pi}}\bigl(\bar{u}(f)\bigr)=\sum_{t\in T}\lambda_t^{\mathcal{E},f}\sum_{s\in S}\bar{\pi}_S(s)\bar{\pi}_T(t|s).
\end{equation}

%Moreover, in exactly the same way, we can begin with a CSI-SEU representation $(\hat{v}',\hat{\pi}')$ of $\succeq'$, and subsequently obtain a CSI-SEU representation $(\bar{v}',\bar{\pi}')$ such that $\bar{u}_s'(f_s^{\high})=1$ and $\bar{u}_s'(f_s^{\low})=1$. 

\vspace{1\baselineskip} \noindent \textsc{Sufficiency:} Suppose that $\bar{\pi}_T(\cdot|s_1),\dots,\bar{\pi}_T(\cdot|s_K)$ are not linearly independent. Hence, there exists some $\bar{\pi}'_S\in\Delta(S)$ with $\bar{\pi}'_S\neq\bar{\pi}_S$ such that
\begin{equation}\label{EQ:linear dependence}
\sum_{s\in S}\bar{\pi}'_S(s)\bar{\pi}_T(\cdot|s)=\sum_{s\in S}\bar{\pi}_S(s)\bar{\pi}_T(\cdot|s)=\bar{\pi}_T.
\end{equation} 
Consider the preference relation $\succeq'$ which is represented by $(\bar{v},\bar{\pi}')$. By $S$-measurability of $\bar{v}$, it follows that $\succeq'$ satisfies $(A_0)-(A_5)$. Moreover, by (\ref{EQ:Utility of T-measurable act}), it follows that $\mathbb{E}_{\bar{\pi}}\bigl(\bar{u}(f)\bigr)=\mathbb{E}_{\bar{\pi}'}\bigl(\bar{u}(f)\bigr)$ for every $f\in\mathcal{T}_\mathcal{E}$, i.e., 
\begin{equation}
\succeq \ = \ \succeq' \mbox{ in }\mathcal{T}_\mathcal{E}.
\end{equation}
For any $\lambda_0,\lambda_1,\lambda_2\in(0,1)$, define the following acts that belong to $\mathcal{F}_\mathcal{E}$:
\begin{eqnarray*}
f^0&:=&\lambda_0 f^{\high}+(1-\lambda_0)f^{\low},\\ 
f^1&:=&\lambda_1 f^{\high}+(1-\lambda_1)f^{\low},\\ 
f^2&:=&\lambda_2 f^{\high}+(1-\lambda_2)f^{\low}.  
\end{eqnarray*}
Then, observe that the following equivalences hold for all $s\in S$:
\begin{eqnarray*}
f^1_{\{s\}\times T}f^2\succeq f^0&\Leftrightarrow&\bar{\pi}_S(s)\lambda_1+(1-\bar{\pi}_S(s))\lambda_2\geq \lambda_0,\\
f^1_{\{s\}\times T}f^2\succeq' f^0&\Leftrightarrow&\bar{\pi}'_S(s)\lambda_1+(1-\bar{\pi}'_S(s))\lambda_2\geq \lambda_0.
\end{eqnarray*}
By $\bar{\pi}_S\neq\bar{\pi}'_S$, there is some $s\in S$ such that $\bar{\pi}_S(s)>\bar{\pi}'_S(s)$. Hence, for some $0<\lambda_2<\lambda_0<\lambda_1<1$, we obtain $f^1_{\{s\}\times T}f^2\succeq f^0$ and $f^1_{\{s\}\times T} f^2\nsucceq' f^0$, meaning that 
\begin{equation}
\succeq \ \neq \ \succeq' \mbox{ in } \mathcal{F}_\mathcal{E}.
\end{equation}
Hence, $\succeq$ is not uniquely extended from $\mathcal{T}_\mathcal{E}$ to $\mathcal{F}_\mathcal{E}$, meaning that $(A_6)$ is violated.

\vspace{1\baselineskip} \noindent \textsc{Necessity:} Suppose that $\bar{\pi}_T(\cdot|s_1),\dots,\bar{\pi}_T(\cdot|s_K)$ are linearly independent. Take an arbitrary $\succeq'$ satisfying $(A_0)-(A_5)$, such that $\succeq \ = \ \succeq'$ in $\mathcal{T}_\mathcal{E}$. This implies that $\succeq_s \ = \ \succeq_s'$ for all $s\in S$, as $\mathcal{T}_\mathcal{E}$ places restrictions only across $T$. Therefore, the CSI-SEU representations $(\bar{v},\bar{\pi})$ and $(\bar{v}',\bar{\pi}')$ that we constructed in the intermediate step, are such that $\bar{v}=\bar{v}'$ and $\bar{\pi}_T(\cdot|s)=\bar{\pi}_T'(\cdot|s)$ for all $s\in S$.  

\vspace{0.5\baselineskip} \noindent Suppose that $\bar{\pi}_S\neq\bar{\pi}_S'$. Then, by linear independence, there exists some $t\in T$ such that
\begin{equation}\label{EQ:linear independence}
\bar{\pi}_T'(t)=\sum_{s\in S}\bar{\pi}_S'(s)\bar{\pi}_T(t|s)<\sum_{s\in S}\bar{\pi}_S(s)\bar{\pi}_T(t|s)=\bar{\pi}_T(t).
\end{equation}
Pick some $\lambda\in(0,1)$ such that $\bar{\pi}'_T(t)<\lambda<\bar{\pi}_T(t)$, and define the act $h$ such that $\bar{u}_{s,t}(h_{s,t})=\lambda$ for every $(s,t)$. It is not difficult to verify that such act exists and $h\in\mathcal{T}_\mathcal{E}$. Moreover, we have $f^{\high}_{S\times \{t\}}f^{\low}\in\mathcal{T}_\mathcal{E}$. Hence, by (\ref{EQ:Utility of T-measurable act}), we obtain 
\begin{equation}
f^{\high}_{S\times \{t\}}f^{\low}\succeq h \mbox{ and } f^{\high}_{S\times \{t\}}f^{\low}\nsucceq' h, 
\end{equation}
which contradict $\succeq \ = \ \succeq'$ in $\mathcal{T}_\mathcal{E}$. Therefore, it must necessarily be the case that $\bar{\pi}_S=\bar{\pi}_S'$. But then, this implies $\bar{\pi}=\bar{\pi}'$, which together with $\bar{v}=\bar{v}'$, yields 
\begin{equation}
\succeq \ = \ \succeq' \mbox{ in } \mathcal{F}_\mathcal{E},
\end{equation}
and the proof is complete.
\end{proof}

\vspace{1\baselineskip}

\begin{proof}[\textup{\textbf{Proof of Theorem \ref{T:Axiomatization}}}]
The proof follows directly from Lemmas \ref{L:Exogenous no stakes}, \ref{L:Exogenous Objective Marginal} and \ref{L:Test linear independence}.
\end{proof}

\begin{small}

\end{small}

\end{document}